%% file: 00Main.tex
\documentclass[lettersize,journal]{IEEEtran}

\usepackage[caption=false,font=normalsize,labelfont=sf,textfont=sf]{subfig}

\usepackage[final]{graphicx}

\input{preamble/preamble}

\input{preamble/letterfonts}

\input{preamble/macros}
\usepackage{cite}

\usepackage{tikz}

\usepackage{booktabs}

\usetikzlibrary{arrows.meta, positioning, calc}

\definecolor{superlinearColor}{RGB}{253, 235, 208}
\definecolor{linearColor}{RGB}{209, 242, 235}
\definecolor{boundaryRed}{RGB}{203, 67, 53}
\definecolor{textDark}{RGB}{44, 62, 80}

\usepackage[colorinlistoftodos, disable]{todonotes}

\graphicspath{{./figs/}}
\newcommand{\bSigma}{\boldsymbol{\Sigma}}
\newcommand{\bhSigma}{\widehat{\boldsymbol{\Sigma}}}

\newcommand{\ku}{k_\bfu}
\newcommand{\kv}{k_\bfv}

\newtheorem{theorem}{Theorem}
\newtheorem{proposition}{Proposition}
\newtheorem{lemma}{Lemma}
\newtheorem{corollary}{Corollary}
\theoremstyle{remark}

\newtheorem{definition}{Definition}

\Crefname{figure}{Fig.}{Fig.}

\begin{document}

\title{Beyond the Flat-Spike: Adaptive Sparse CCA for Decaying and Unbalanced Signals}
\author{Mengchu~Xu,~\IEEEmembership{Member,~IEEE,}
  Jian~Wang,~\IEEEmembership{Member,~IEEE,}
  and~Yonina~C.~Eldar,~\IEEEmembership{Fellow,~IEEE}
  \thanks{Mengchu Xu and Yonina C. Eldar are with the Faculty of Mathematics and Computer Science, Weizmann Institute of Science, Rehovot 7610001, Israel (e-mail: mengchu.xu@weizmann.ac.il; yonina.eldar@weizmann.ac.il). Yonina C. Eldar is also with the Department of Electrical and Computer Engineering, Northeastern University, Boston, MA 02115 USA (e-mail: y.eldar@northeastern.edu). Corresponding author: Mengchu Xu.}
  \thanks{Jian Wang is with the School of Data Science, Fudan University, Shanghai 200433, China (e-mail: jian\_wang@fudan.edu.cn).}
}
\date{}

\maketitle

\begin{abstract}
  Sparse Canonical Correlation Analysis (SCCA) is a fundamental statistical tool for identifying linear relationships in high-dimensional, multi-view data. While minimax theory establishes an optimal sample complexity scaling additively with the sparsity levels of the canonical vectors, computationally efficient algorithms typically suffer from a suboptimal multiplicative dependence. This computational-statistical gap is intrinsically tied to worst-case ``flat'' signal assumptions. In practice, however, multi-view signals frequently exhibit structured energy concentration, such as a power-law decay. To exploit this structural concentration and bypass the worst-case bottleneck, we propose Bilateral Spectral Energy Pursuit (Bi-SEP). Operating directly on the cross-covariance matrix, Bi-SEP is a stagewise adaptive algorithm that utilizes a proxy refinement step to dynamically track and capture cross-view signal energy. Theoretically, we establish a profile-adaptive sample complexity bound governed by the coupled energy profiles of the two views. Notably, under power-law decay models, we reveal a synergistic phase transition: the optimal linear sample complexity is attainable provided that the aggregate decay rate of the two views is sufficiently large. This result demonstrates that a highly concentrated signal in one view allows the model to accommodate a completely flat signal in its partner. Numerical experiments validate our theoretical findings, illustrating the advantages of Bi-SEP in structured, non-flat signal regimes.
\end{abstract}

\begin{IEEEkeywords}
  Sparse canonical correlation analysis, statistical-to-computational gap, profile-adaptive estimation, power-law decay, phase transition.
\end{IEEEkeywords}

\input{01Introduction.tex}

\input{02Algorithm.tex}

\input{03Results.tex}

\input{04Proofs.tex}

\input{05Discussion.tex}

\input{06Experiments.tex}

\input{07Conclusion.tex}

\input{08Appendices.tex}

\bibliographystyle{IEEEtran}

\bibliography{bibs/IEEEabrv2025, bibs/spca, bibs/pr}

\input{09Suppmentary.tex}

\end{document}

%% file: preamble/preamble.tex
\usepackage{microtype}
\usepackage{multicol}
\usepackage{array}

\usepackage{amsmath, amsfonts, amsthm, amssymb}
\usepackage{mathtools}
\usepackage{xfrac}
\usepackage[makeroom]{cancel}
\usepackage{empheq}

\usepackage[dvipsnames, svgnames, x11names]{xcolor}
\usepackage{tikz}
\usepackage{transparent}
\usepackage{pdfpages}

\definecolor{winered}{rgb}{0.5,0,0}
\definecolor{navyblue}{rgb}{0,0,0.5}
\definecolor{teal}{rgb}{0,0.35,0.3}
\definecolor{doc}{RGB}{0,60,110}
\definecolor{myg}{RGB}{56, 140, 70}
\definecolor{myb}{RGB}{45, 111, 177}
\definecolor{myr}{RGB}{199, 68, 64}
\definecolor{mytheorembg}{HTML}{F2F2F9}
\definecolor{mytheoremfr}{HTML}{00007B}
\definecolor{mylemmabg}{HTML}{FFFAF8}
\definecolor{mylemmafr}{HTML}{983b0f}
\definecolor{mypropbg}{HTML}{f2fbfc}
\definecolor{mypropfr}{HTML}{191971}
\definecolor{mydefinitbg}{HTML}{E5E5FF}
\definecolor{mydefinitfr}{HTML}{3F3FA3}
\definecolor{myexercisebg}{HTML}{F2FBF8}
\definecolor{myexercisefg}{HTML}{88D6D1}

\usepackage[most, many, breakable]{tcolorbox}
\usepackage{varwidth}
\usepackage{etoolbox}

\usepackage{algorithm, algpseudocode}

\usepackage{enumitem}
\usepackage{caption}
\usepackage{chngcntr}
\usepackage{titletoc}
\usepackage{comment}
\usepackage{import}
\usepackage{xifthen}
\usepackage{nameref}

\usepackage[colorlinks=true, final, linkcolor=winered, urlcolor=navyblue,
	citecolor=teal, filecolor=navyblue, pdfborder={0 0 0}]{hyperref}
\usepackage{bookmark}
\usepackage{cleveref}
\usepackage{theoremref}

\newif\ifMasterColorSwitch
	\MasterColorSwitchtrue

	\newif\ifEnableMyColors
		\EnableMyColorstrue

		\NewDocumentCommand{\MasterColorMode}{m}{
			\ifthenelse{\equal{#1}{on}}
			{\MasterColorSwitchtrue\typeout{>>> Master Color Mode Enabled}}
			{\MasterColorSwitchfalse\typeout{>>> Master Color Mode Disabled}}
		}

		\NewDocumentCommand{\ColorMode}{m}{
			\ifthenelse{\equal{#1}{on}}
			{\EnableMyColorstrue\typeout{>>> Custom Colors Enabled}}
			{\EnableMyColorsfalse\typeout{>>> Custom Colors Disabled}}
		}

%% file: preamble/letterfonts.tex
\newcommand{\bbE}{\mathbb{E}}

	\newcommand{\bbP}{\mathbb{P}}
	\newcommand{\bbR}{\mathbb{R}}

\newcommand{\bfA}{\mathbf{A}}	
	
\newcommand{\bfE}{\mathbf{E}}	
	
\newcommand{\bfI}{\mathbf{I}}

\newcommand{\bfW}{\mathbf{W}}	
	\newcommand{\bfZ}{\mathbf{Z}}

\newcommand{\bfa}{\mathbf{a}}	\newcommand{\bfb}{\mathbf{b}}

	\newcommand{\bfr}{\mathbf{r}}
	
\newcommand{\bfu}{\mathbf{u}}	\newcommand{\bfv}{\mathbf{v}}
	\newcommand{\bfx}{\mathbf{x}}
\newcommand{\bfy}{\mathbf{y}}

\newcommand{\mcE}{\mathcal{E}}

	\newcommand{\mcN}{\mathcal{N}}

\newcommand{\xmax}{x_{\text{max}}}

%% file: 01Introduction.tex
\section{Introduction}
\label{sec:introduction}

Canonical Correlation Analysis (CCA), initially introduced by Hotelling~\cite{Hotelling1936Relations}, remains a cornerstone of multivariate statistics for identifying linear relationships between two sets of variables. In the modern era of data science, applications spanning genomics, neuroimaging, and financial econometrics increasingly demand the analysis of high-dimensional multi-view data~\cite{parkhomenko2009sparse,Avants2014Sparse,Wilms2015Sparse}. Furthermore, within the broader signal processing community, CCA has been extensively leveraged for array signal processing, blind source separation, and multi-sensor fusion \cite{scharf2000canonical, via2005canonical, li2008cca, hardoon2004canonical}, where the number of features significantly exceeds the sample size. In these high-dimensional, small-sample regimes, classical CCA encounters fundamental limitations: sample covariance matrices become singular or highly ill-conditioned, and the empirical estimators suffer from statistical inconsistency~\cite{Johnstone2001Distribution, ledoit2004well, bickel2008covariance, vershynin2012introduction}. Furthermore, the resulting canonical vectors, which are typically dense, lack the interpretability required for scientific discovery.

To address these challenges, Sparse CCA (SCCA) has emerged as the standard paradigm, enforcing sparsity constraints to recover interpretable leading canonical pairs~\cite{Zou2006Sparse}. The theoretical landscape of SCCA was formalized by Gao, Ma, and Zhou~\cite{gao2017sparse}, who established that under the ideal ``exact sparsity'' setting, the minimax optimal sample complexity scales as $m \asymp (k_\bfu + k_\bfv) \log n$. However, they also proved a computational lower bound, highlighting a fundamental statistical-to-computational gap~\cite{gao2017sparse,brennan2020reducibility}: polynomial-time algorithms face an unavoidable bottleneck of $m \gtrsim k_\bfu k_\bfv \log n$, established via reduction from the Planted Clique problem.

This computational barrier is primarily driven by a fundamental algorithmic hurdle inherent to SCCA: the bilateral deadlock. Because the true signal is encoded exclusively through the cross-correlation $\rho \bfu \bfv^\top$, accurately estimating the support of $\bfu$ requires a well-aligned estimate of $\bfv$, and vice versa. While computationally efficient iterative approaches can achieve optimal linear convergence rates, such as Penalized Matrix Decomposition (PMD)~\cite{witten2009penalized}, Truncated Power Method (TPower) variants~\cite{Yuan2013TPower, chen2012structured, asteris2016simple}, and majorization-minimization algorithms~\cite{archambeau2008sparse, sriperumbudur2011majorization}, they often bypass this deadlock in theory by assuming a high-quality initialization is already given~\cite{jain2013low, wang2014tighten, balakrishnan2017statistical}. In practice, obtaining such an initial estimate from scratch (e.g., via the maximal empirical cross-covariance entry) typically requires the $k_\bfu k_\bfv$ worst-case sample complexity~\cite{gao2017sparse}.

Notably, this computational hardness is intrinsically tied to worst-case ``flat'' signals, where all active canonical weights are uniformly weak. In modern multi-view applications, however, signals rarely exhibit such an adversarial structure. Instead, consistent with established compressibility priors in natural images and audio~\cite{field1987relations, mallat1999wavelet}, canonical components typically exhibit a coupled energy decay (e.g., following a power-law distribution~\cite{donoho2006compressed, candes2008introduction}), where a small subset of features captures the dominant cross-view correlation. Exploiting this natural structural concentration has the potential to overcome the multiplicative bottleneck: strong prominent coordinates can act as reliable anchors, facilitating the recovery of the remaining weaker entries.

Yet, developing a profile-adaptive framework remains challenging due to the reliance on profile-blind truncation. Because standard approaches enforce rigid global thresholds or fixed-cardinality constraints (often seen in iterative hard thresholding paradigms~\cite{blumensath2009iterative, Ma2013SPCAIT, cai2011orthogonal}), they process all non-zero coefficients uniformly. Consequently, an algorithm cannot lower its threshold to capture trailing, weaker coefficients without simultaneously introducing excessive ambient cross-covariance noise. To ensure theoretical tractability under this algorithmic rigidity, existing analyses often resort to the worst-case flat-signal assumption, leaving the synergistic benefits of decaying profiles largely unquantified.

To bypass the limitations of premature uniform truncation, a stagewise support expansion approach was recently introduced for Sparse PCA (SPCA) via \textit{Spectral Energy Pursuit (SEP)}~\cite{xu2025spca}. Instead of enforcing a fixed target sparsity from the outset, SEP progressively increments the allowed cardinality while fully re-selecting the active coordinates at each iteration. However, extending this pursuit mechanism from the symmetric, rank-one estimation of SPCA to the asymmetric structure of SCCA is non-trivial due to the inherent bilateral deadlock discussed earlier. In SPCA, the pursuit mechanism is self-reinforcing: once a coordinate begins to align with the signal, its contribution naturally amplifies in subsequent iterations, allowing dominant components to progressively emerge. Typically, this process can be initiated from coordinates associated with large diagonal variances. In contrast, SCCA operates on cross-view interactions. Dictated by the bilateral deadlock discussed earlier, concentrated energy in one view can only be revealed if the other view already provides a sufficiently aligned filtering direction. This brings the challenge full circle back to the initialization bottleneck: without a reliable starting anchor, an adaptive energy pursuit cannot even commence.

To explicitly break this dependency loop and bridge the theory-practice gap from scratch, we propose \textit{Bilateral Spectral Energy Pursuit (Bi-SEP)}. By executing a principled zero-step proxy generation that safely decouples the initial cross-view dependency, alongside a stagewise expansion that tracks signal concentration, Bi-SEP naturally adapts to arbitrary signal decay profiles and asymmetric sparsity budgets ($k_\bfu \neq k_\bfv$). Our main theoretical contributions directly address the aforementioned bottlenecks:

\begin{enumerate}
  \item \textbf{Profile-Adaptive Sample Complexity and Synergistic Phase Transition:}
        We provide a rigorous finite-sample analysis demonstrating that Bi-SEP bypasses the worst-case flat-signal bottleneck. By characterizing the dual signal-energy structure functions (\Cref{def:structure_function}), we derive sample complexity guarantees governed by the coupled energy profiles of the two views. In particular, we show that near-minimax linear sample complexity scaling with $(k_\bfu+k_\bfv)\log n$ is achievable whenever the two views exhibit sufficient joint concentration of signal energy, revealing a synergistic effect between the views. As an illustrative example, under power-law decay models (see \Cref{def:power_law}), this joint concentration condition translates to the requirement that the decay rates satisfy $\alpha_\bfu+\alpha_\bfv \ge 1$; see \Cref{cor:scca_power_law_transition}. This reveals a key conceptual insight: a highly concentrated signal in one view can effectively compensate for a structurally flat signal in the other, and thus stands in contrast to the classical minimax theory for SCCA, where the sample complexity is governed by worst-case flat signals.

  \item \textbf{Provable Decoupling of the Initialization Bottleneck:}
        We theoretically justify a novel zero-step proxy generation mechanism that bridges the critical gap between initialization and stagewise iteration. We show that this carefully designed step explicitly converts a joint-product initialization bound into strictly valid, element-wise energy lower bounds. This decoupling establishes the fundamental base case for the algorithmic induction, ensuring provable convergence from a locally maximal entry without requiring artificially strong initialization assumptions.
\end{enumerate}

Throughout this paper, boldface lowercase letters (e.g., $\bfx$) and uppercase letters (e.g., $\bfW$) are used to denote vectors and matrices, respectively. For a positive integer $n$, we define $[n] := \{1, 2, \dots, n\}$. For a subset of indices $S \subset [n]$, $|S|$ is its cardinality. $\bfx_S \in \mathbb{R}^{|S|}$ is the subvector of $\bfx$ restricted to the indices in $S$, and $\bfW_{S_1, S_2}$ is the submatrix of $\bfW$ formed by the rows in $S_1$ and columns in $S_2$. The $\ell_2$-norm and $\ell_0$-pseudo-norm of a vector are written as $\|\bfx\|_2$ and $\|\bfx\|_0$, respectively. For a vector $\bfx \in \mathbb{R}^n$, $x_i$ is its $i$-th coordinate, and $x_{(i)}$ denotes its $i$-th largest entry in absolute value, such that $|x_{(1)}| \ge |x_{(2)}| \ge \dots \ge |x_{(n)}|$. For a matrix $\bfW$, $\|\bfW\|_2$ represents its spectral norm. Finally, we use standard asymptotic notation: $a \gtrsim b$ (and equivalently $b \lesssim a$) means $a \ge C b$ for some universal constant $C > 0$, and $a \asymp b$ indicates that both $a \gtrsim b$ and $a \lesssim b$ hold.

The remainder of this paper is organized as follows. Section \ref{sec:algorithm} introduces the whitened spike model, necessary notations, and the Bi-SEP algorithm. Section \ref{sec:results} presents the main theoretical recovery guarantees and the phase transition corollary. Section \ref{sec:proofs} outlines the core proofs of the theoretical results, and Section \ref{sec:discussion} offers a deeper discussion on the decoupling mechanisms and inherent error bounds coupling. Section \ref{sec:experiments} presents synthetic experiments to validate the theoretical claims. Finally, Section \ref{sec:conclusion} concludes the paper.

%% file: 02Algorithm.tex
\section{Algorithm}
\label{sec:algorithm}
\subsection{Problem Setup}

SCCA seeks sparse linear combinations of two sets of variables that maximize their cross-correlation. Let $\bfx, \bfy \in \mathbb{R}^n$ be two random vectors with cross-covariance $\bSigma_{xy}$ and marginal covariances $\bSigma_{xx}, \bSigma_{yy}$. For simplicity of exposition, we use a common ambient dimension $n$ here, though the analysis extends verbatim to the asymmetric case $\bfx\in\mathbb{R}^{n_1}$ and $\bfy\in\mathbb{R}^{n_2}$ with minor notational changes. SCCA seeks canonical weight vectors $\bfa$ and $\bfb$ by solving the following constrained optimization problem:
\begin{equation}\label{eq:scca-prob}
  \begin{aligned}
    \max_{\bfa, \bfb} \quad & \bfa^\top \bSigma_{xy} \bfb                                         \\
    \text{subject to} \quad & \bfa^\top \bSigma_{xx} \bfa = 1, \ \bfb^\top \bSigma_{yy} \bfb = 1, \\
                            & \|\bfa\|_0 \le k_\bfa, \ \|\bfb\|_0 \le k_\bfb,
  \end{aligned}
\end{equation}
where $k_\bfa$ and $k_\bfb$ denote the sparsity budgets for each view.

For theoretical analysis in the high-dimensional regime, we assume $m$ independent and identically distributed samples $\{(\bfx_i, \bfy_i)\}_{i=1}^m$ drawn from a zero-mean joint multivariate Gaussian distribution. We adopt the whitened spiked covariance model \cite{Johnstone2001Distribution, gao2017sparse}, where the marginal covariances are identity matrices, i.e., $\bSigma_{xx} = \bSigma_{yy} = \bfI_n$. This reduces the variance constraints to unit $\ell_2$ norm constraints, i.e., $\|\bfa\|_2 = \|\bfb\|_2 = 1$. Additionally, the population cross-covariance matrix is assumed to have the rank-1 structure
\begin{equation}
  \bSigma_{xy} = \mathbb{E}[\bfx \bfy^\top] = \rho \bfu \bfv^\top,
\end{equation}
where $\rho \in (0, 1)$ is the canonical correlation. The underlying signals $\bfu, \bfv \in \mathbb{R}^n$ represent the true sparse canonical weight vectors, satisfying $\|\bfu\|_2 = \|\bfv\|_2 = 1$ with sparsity levels $\|\bfu\|_0 \le k_\bfu$ and $\|\bfv\|_0 \le k_\bfv$.

Given the sample cross-covariance matrix $\bhSigma_{xy} = \frac{1}{m} \sum_{i=1}^m \bfx_i \bfy_i^\top$, our objective is to estimate the true canonical pair $(\bfu, \bfv)$. This leads to the empirical counterpart of the whitened SCCA problem:
\begin{equation}\label{eq:scca-whitened}
  \begin{aligned}
    \max_{\hat{\bfu}, \hat{\bfv}} \quad & \hat{\bfu}^\top \bhSigma_{xy} \hat{\bfv}                                                                    \\
    \text{subject to} \quad             & \|\hat{\bfu}\|_2 = 1, \ \|\hat{\bfv}\|_2 = 1, \ \|\hat{\bfu}\|_0 \le k_\bfu, \ \|\hat{\bfv}\|_0 \le k_\bfv,
  \end{aligned}
\end{equation}
which constitutes the core estimation task studied in this paper. Due to the presence of the non-convex $\ell_0$-norm constraints, directly solving \eqref{eq:scca-whitened} is computationally intractable. This fundamental challenge motivates the need for efficient approximation algorithms. Before introducing our proposed methodology, we briefly review existing approaches and their algorithmic limitations in addressing this specific formulation.

\subsection{Prior Art and Algorithmic Limitations}

To contextualize the mechanics of the proposed algorithm, we briefly review standard methodologies for SCCA and their structural constraints. While other perspectives such as probabilistic modeling exist~\cite{archambeau2008sparse, sriperumbudur2011majorization}, computationally efficient algorithmic approaches broadly fall into two families, neither of which explicitly exploits signals with non-uniform energy distributions.

The first family relies on convex relaxations, notably Penalized Matrix Decomposition (PMD) \cite{witten2009penalized} and its structurally penalized variants \cite{chen2012structured}. PMD enforces sparsity by solving the penalized optimization problem:
\begin{equation}
  \begin{aligned}
    \max_{\bfu, \bfv} \quad & \bfu^\top \bhSigma_{xy} \bfv                                                                      \\
    \text{subject to} \quad & \|\bfu\|_2^2 \le 1, \|\bfv\|_2^2 \le 1, \mathcal{P}_1(\bfu) \le c_1, \mathcal{P}_2(\bfv) \le c_2,
  \end{aligned}
\end{equation}
where $\mathcal{P}(\cdot)$ is typically an $\ell_1$-norm or a structurally induced penalty. While PMD provides a computationally tractable framework, its reliance on global penalty parameters $c_1$ and $c_2$ enforces a uniform shrinkage pattern. In regimes where the underlying signal exhibits a power-law decay, applying a single global threshold often fails to capture trailing coefficients without simultaneously introducing ambient cross-covariance noise into the estimate. Furthermore, while the statistical properties of the globally optimal relaxed objective are well-studied, the alternating PMD algorithm solves a biconvex problem and generally guarantees convergence only to a local stationary point. Consequently, it lacks sharp, non-asymptotic recovery guarantees capable of adapting to varying signal energy profiles and signal-to-noise ratios (SNR) across coordinates.

The second family, which serves as the primary baseline in our experiments, employs non-convex alternating maximization equipped with hard cardinality constraints. A canonical example is the Truncated Power Method (TPower) and its alternating extensions to SCCA \cite{Yuan2013TPower,asteris2016simple}. These algorithms seek to recover the leading canonical pair by alternatingly applying a hard-thresholding operator $\mathcal{H}_k(\cdot)$, which retains only the $k$ largest entries in absolute value:
\begin{equation}
  \hat{\bfu}^{(t+1)} \propto \mathcal{H}_{\ku}(\bhSigma_{xy} \hat{\bfv}^{(t)}), \quad
  \hat{\bfv}^{(t+1)} \propto \mathcal{H}_{\kv}(\bhSigma_{xy}^\top \hat{\bfu}^{(t+1)}),
\end{equation}
where $\mathcal{H}_k(\cdot)$ is followed by $\ell_2$-normalization.
While computationally efficient, this standard iteration template is constrained by two structural limitations. First, it exhibits a strong initialization dependency. Because the cross-view signal is encoded exclusively through the correlation matrix $\rho \bfu\bfv^\top$, the reliability of the update for $\hat{\bfu}^{(t+1)}$ strictly depends on the alignment quality of $\hat{\bfv}^{(t)}$. Under standard data-driven initialization strategies used to establish theoretical guarantees (e.g., using the largest magnitude entry of $\bhSigma_{xy}$), the initial alignment is generally insufficient to suppress ambient noise. Consequently, without stringent initialization conditions, alternating maximization frameworks incur a suboptimal sample complexity scaling as $m \gtrsim \ku \kv \log n$~\cite{gao2017sparse,brennan2020reducibility}.

Second, the algorithm encounters a uniform truncation limitation. In the early iterations of the algorithm, the underlying estimates are inherently inaccurate. By enforcing fixed-cardinality constraints ($\ku$ and $\kv$) immediately from the start, the algorithm is forced to select noisy coordinates merely to fulfill the sparsity quota, long before the dominant signal components are securely aligned. For signals characterized by a decaying energy profile, this premature uniform truncation obscures the statistical advantage provided by highly concentrated signal entries.

\subsection{Bilateral Spectral Energy Pursuit (Bi-SEP)}

To address the initialization dependency and the premature uniform truncation observed in alternating maximization schemes, we introduce Bilateral Spectral Energy Pursuit (Bi-SEP), detailed in Algorithm \ref{alg:bi_sep}. Bi-SEP replaces the fixed-cardinality iteration with a stagewise adaptive framework. It dynamically constructs the support sets of $\bfu$ and $\bfv$ by progressively capturing the dominant marginal energies from the cross-covariance matrix $\bhSigma_{xy}$.

The algorithm operates through two primary mechanisms.
\begin{enumerate}
  \item \textbf{Decoupled Initialization ($t=0$):}
        Initialization begins from the maximal empirical entry $(i_0,j_0)=\arg\max_{i,j}|(\bhSigma_{xy})_{ij}|$. While this choice bounds the joint signal-plus-noise product, cross-term interference implies that the selected indices do not necessarily correspond to the dominant marginal coordinates of the true canonical vectors. To decouple this joint dependence, Bi-SEP restricts the initial support size to one and immediately performs a restricted estimation on the corresponding $1\times1$ submatrix (lines 5--9 in Algorithm~\ref{alg:bi_sep}). The resulting rank-1 estimate generates proxy vectors $\bfr_\bfu$ and $\bfr_\bfv$, from which the top-1 coordinates are re-selected. This procedure converts the joint-product guarantee into element-wise energy bounds, providing a statistically reliable base configuration before support expansion.

  \item \textbf{Stagewise Expansion ($t>0$):}
        At iteration $t$, Bi-SEP performs a restricted singular value decomposition (SVD) on the submatrix indexed by the current active supports $S_\bfu^{(t)}$ and $S_\bfv^{(t)}$. The resulting normalized estimates produce proxy vectors $(\bfr_\bfu, \bfr_\bfv)$ over the full ambient dimension. The active supports are then globally re-selected by retaining the top-$\min(t+1,k)$ coordinates (with $k \in \{k_\bfu, k_\bfv\}$) from the proxy vectors. This stagewise growth gradually relaxes the sparsity constraint, allowing structurally dominant coordinates to be incorporated before weaker components enter the support. By securing the high-energy anchors first, Bi-SEP naturally exploits the intrinsic energy decay profile of sparse canonical vectors. The highly concentrated entries establish a strong and low-noise cross-view alignment, which subsequently guides the recovery of weaker trailing coefficients. Because the supports are recomputed globally at every iteration, early mis-selections caused by weak cross-view alignment can be automatically replaced as stronger signal components emerge.
\end{enumerate}

\begin{algorithm}[t]
  \caption{Bilateral Spectral Energy Pursuit (Bi-SEP)}
  \label{alg:bi_sep}
  \begin{algorithmic}[1]
    \Require Sample cross-covariance matrix $\bhSigma_{xy}$, sparsity budgets $k_\bfu, k_\bfv$.
    \State \textbf{Initialize:} Select $(i_0, j_0) = \arg\max_{i,j} |(\bhSigma_{xy})_{ij}|$.
    \State Set initial supports $S_\bfu^{(0)} = \{i_0\}$, $S_\bfv^{(0)} = \{j_0\}$.
    \State Let $k_{\max} = \max(k_\bfu, k_\bfv)$.

    \For{$t = 0$ to $k_{\max}-1$} \Comment{decoupling when $t=0$}
    \State \textbf{1. Restricted Estimation:}
    \State Compute the leading singular pair $(\hat{\bfu}^{(t)}, \hat{\bfv}^{(t)})$ of the submatrix $(\bhSigma_{xy})_{S_\bfu^{(t)}, S_\bfv^{(t)}}$. Implicitly pad estimates with zeros to dimension $n$. \Comment{Unit norm}

    \State \textbf{2. Proxy Generation:}
    \State $\bfr_\bfu \leftarrow \bhSigma_{xy} \hat{\bfv}^{(t)}$
    \State $\bfr_\bfv \leftarrow \bhSigma_{xy}^\top \hat{\bfu}^{(t)}$

    \State \textbf{3. Saturate and Hold Selection:}
    \State $S_\bfu^{(t+1)} \leftarrow$ indices of top-$\min(t{+}1, k_\bfu)$ entries of $|\bfr_\bfu|$.
    \State $S_\bfv^{(t+1)} \leftarrow$ indices of top-$\min(t{+}1, k_\bfv)$ entries of $|\bfr_\bfv|$.
    \EndFor

    \State \textbf{Final Estimation:}
    \State Compute the leading singular pair $(\hat{\bfu}^{(k_{\max})}, \hat{\bfv}^{(k_{\max})})$ of the submatrix $(\bhSigma_{xy})_{S_\bfu^{(k_{\max})}, S_\bfv^{(k_{\max})}}$. \Comment{Unit norm}

    \State \textbf{Output:} Final estimators $\hat{\bfu}^{(k_{\max})}, \hat{\bfv}^{(k_{\max})}$ implicitly zero-padded to $\mathbb{R}^n$.
  \end{algorithmic}
\end{algorithm}

%% file: 03Results.tex
\section{Results}
\label{sec:results}

In this section, we establish the theoretical sample complexity and recovery guarantees for the Bi-SEP algorithm. We demonstrate that Bi-SEP achieves a profile-adaptive sample complexity that bypasses standard worst-case cardinality bounds by explicitly leveraging the intrinsic energy concentration of the signals.

\subsection{Signal Energy Parameterization}

Traditional sample complexity bounds for SCCA depend strictly on the active support cardinalities $k_\bfu$ and $k_\bfv$. To mathematically quantify the non-uniform energy distribution exploited by Bi-SEP, we adapt the continuous measure of signal concentration recently introduced for SPCA in \cite{xu2025spca}.

\begin{definition}[Dual Signal-Energy Structure Functions]\label{def:structure_function}
      Given unit sparse vectors $\bfu, \bfv \in \mathbb{R}^n$, we define their respective structure functions $s_\bfu(p)$ and $s_\bfv(q)$ as
      \begin{equation}\label{eq:structure_function}
            s_\bfu(p) := \Big(\sum_{i=1}^{p} u_{(i)}^2\Big)^{-1}, \quad s_\bfv(q) := \Big(\sum_{j=1}^{q} v_{(j)}^2\Big)^{-1}
      \end{equation}
      for all $p,q \in [n]$. Note that since $\bfu$ and $\bfv$ are strictly sparse and possess unit $\ell_2$-norm, their energy accumulation saturates at their true sparsities. Consequently, $s_\bfu(p) = 1$ for all $p \ge k_\bfu$, and symmetrically $s_\bfv(q) = 1$ for all $q \ge k_\bfv$.
\end{definition}

The structure function $s_\bfu(p)$ quantifies the inverse cumulative energy within the top $p$ coordinates of $\bfu$. For a uniform (worst-case) sparse vector where $|u_{(i)}| = 1/\sqrt{k_\bfu}$, the function scales as $s_\bfu(p) = k_\bfu / p$. Conversely, for highly concentrated signals, $s_\bfu(p)$ approaches $1$ rapidly even for small $p$. This parameterization facilitates theoretical guarantees that dynamically reflect the true energy decay profiles of the underlying views. A canonical example of such concentration is the power-law decay profile.

\begin{definition}[Power-law Decay Profile] \label{def:power_law}
      A $k$-sparse unit vector $\bfx \in \mathbb{R}^n$ is said to follow a \textit{power-law decay profile} with rate $\alpha \ge 0$ if its non-zero entries, sorted by magnitude, satisfy
      \begin{equation}
            x_{(i)}^2 \propto i^{-\alpha}, \quad i=1,\dots,k,
      \end{equation}
      normalized such that $\|\bfx\|_2 = 1$. The parameter $\alpha$ controls the signal concentration: $\alpha=0$ corresponds to a flat (worst-case) signal, while $\alpha > 1$ denotes strong concentration.
\end{definition}

\subsection{Main Recovery Guarantees}
The proposed Bi-SEP algorithm operates via singular value decompositions on submatrices of $\bhSigma_{xy}$. Consequently, the theoretical performance is primarily determined by the spectral properties of the cross-term noise. We define the effective noise matrix $\bfW$ as the deviation from the population cross-covariance:
\begin{equation}\label{eq:effective_noise}
      \bfW := \bhSigma_{xy} - \rho \bfu \bfv^\top.
\end{equation}

The adaptive nature of Bi-SEP implies that its sequence of estimated supports is data-dependent. Bounding the estimation error therefore requires uniform control over $\bfW$ across all possible sparse rectangular submatrices.
We formalize this via the uniform noise event:
\begin{equation}\label{eq:union_event}
      \mcE(t) := \hspace{-1mm}\bigcap_{\substack{|S_1|\le k_\bfu \\ |S_2|\le k_\bfv}} \left\{ \|\bfW_{S_1,S_2}\|_2 \le C \sqrt{\frac{(|S_1|+|S_2|)\log n+t}{m}} \right\}.
\end{equation}
Let $\mcE$ denote the event $\mcE(t)$ with $t=c'\log n$. As formally stated in \Cref{prop:union_event} (\Cref{sec:proofs}), this uniform spectral bound is guaranteed to hold with probability at least $1-2n^{-c}$, provided that the sample size scales appropriately with the block dimensions.

Conditioned on the event $\mcE$, \Cref{thm:recovery_guarantee} presents the primary recovery guarantee for Bi-SEP. The sample complexity is no longer bottlenecked by the terminal sparsities $k_\bfu$ and $k_\bfv$, but is intrinsically governed by the coupled energy profiles of the underlying canonical vectors.
\begin{theorem}[Profile-adaptive sample complexity for direction estimation] \label{thm:recovery_guarantee}
      Condition on the high-probability event \(\mcE\). Let $k_{\max} = \max(k_\bfu, k_\bfv)$. For any iteration step $t$, define the effective support sizes as $t_\bfu = \min(t, k_\bfu)$ and $t_\bfv = \min(t, k_\bfv)$. For any target approximation tolerance \(\gamma\in(0,1)\), if
      \begin{equation}\label{eq:m-bound-precise}
            m \ \ge\ C_1 \max_{1\le t \le k_{\max}} \underbrace{(t_\bfu + t_\bfv)}_{\text{effective noise}} \underbrace{s_{\bfu}(t_\bfu)s_{\bfv}(t_\bfv)}_{\text{coupled profile}} \log n,
      \end{equation}
      then the estimates \(\hat\bfu^{(k_{\max})}\) and \(\hat \bfv^{(k_{\max})}\) produced by \Cref{alg:bi_sep} satisfy
      \begin{equation}\label{eq:u-error}
            \sin\angle(\hat \bfu^{(k_{\max})},\bfu)\le\underbrace{\sqrt{1-\gamma}}_{\text{approximation error}}
            + ~
            \underbrace{C_2\sqrt{\frac{(\ku+\kv)\log n}{m}}}_{\text{statistical error}},
      \end{equation}
      and
      \begin{equation}\label{eq:v-error}
            \sin\angle(\hat \bfv^{(k_{\max})},\bfv) \le \underbrace{\sqrt{1-\gamma}}_{\text{approximation error}}
            + ~
            \underbrace{C_2\sqrt{\frac{(\ku+\kv)\log n}{m}}}_{\text{statistical error}},
      \end{equation}
      where $C_1 = C \frac{(1+\rho)^2}{\rho^2 \gamma^2(1-\sqrt{\gamma})^2}$ and $C_2 = C \frac{(1+\rho)}{\rho \gamma}$.
\end{theorem}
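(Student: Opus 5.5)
We will argue by induction over the stages of \Cref{alg:bi_sep}, conditioning throughout on $\mcE$. Write $\epsilon(a,b):=C\sqrt{(a+b)\log n/m}$ for the uniform spectral bound on $\bfW_{S_1,S_2}$ when $|S_1|\le a$ and $|S_2|\le b$.

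The invariant to carry is a \emph{product energy lower bound}: after selection at stage $t$, the captured energies $E_\bfu^{(t)}:=\|\bfu_{S_\bfu^{(t)}}\|_2^2$ and $E_\bfv^{(t)}:=\|\bfv_{S_\bfv^{(t)}}\|_2^2$ should satisfy
\[
E_\bfu^{(t)}E_\bfv^{(t)}\ \ge\ \gamma\, s_\bfu(t_\bfu)^{-1}s_\bfv(t_\bfv)^{-1}.
\]
That is, the product of captured energies is at least a $\gamma$ fraction of what an oracle selecting the top $t_\bfu$ and $t_\bfv$ coordinates would capture. At $t=k_{\max}$ both structure functions equal $1$, so $E_\bfu E_\bfv\ge\gamma$, and in particular each of $E_\bfu$ and $E_\bfv$ is at least $\gamma$.

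\textbf{Final estimation.} On the final supports, the restricted matrix is $\rho\,\bfu_{S_\bfu}\bfv_{S_\bfv}^\top+\bfW_{S_\bfu,S_\bfv}$. Its signal singular value is $\rho\sqrt{E_\bfu E_\bfv}\ge\rho\gamma$, and its noise is at most $\epsilon(k_\bfu,k_\bfv)$. Wedin's theorem then gives
\[
\sin\angle(\hat\bfu,\bfu_{S_\bfu}/\|\bfu_{S_\bfu}\|_2)\lesssim \epsilon/(\rho\gamma).
\]
The angle between $\bfu_{S_\bfu}$ and $\bfu$ has sine $\sqrt{1-E_\bfu}\le\sqrt{1-\gamma}$. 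Adding the two angles yields \eqref{eq:u-error} with $C_2\propto(1+\rho)/(\rho\gamma)$, and \eqref{eq:v-error} follows the same way.

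\textbf{Base case $t=0$.} The entry $(i_0,j_0)$ maximizes $|\rho u_iv_j+W_{ij}|$, so
\[
\rho|u_{i_0}v_{j_0}|\ \ge\ \rho|u_{(1)}v_{(1)}|-2\epsilon(1,1).
\]
The $1\times1$ SVD gives $\hat\bfu^{(0)}=\pm e_{i_0}$ and $\hat\bfv^{(0)}=\pm e_{j_0}$. The proxy is then
\[
\bfr_\bfu=\rho v_{j_0}\bfu\pm\bhSigma_{xy}e_{j_0}-\rho v_{j_0}\bfu,
\]
whose noise column is controlled entrywise by $\epsilon(1,1)$. The top-1 reselection therefore picks an index $i$ with $|u_i|\ge|u_{(1)}|-2\epsilon/(\rho|v_{j_0}|)$. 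Because $|v_{j_0}|\ge|u_{i_0}v_{j_0}|\ge|u_{(1)}v_{(1)}|-O(\epsilon/\rho)$, this converts the joint bound into an \emph{elementwise} bound on each view. This is exactly where the condition at $t=1$, namely $m\gtrsim 2\,s_\bfu(1)s_\bfv(1)\log n$, makes $\epsilon/\rho\ll|u_{(1)}v_{(1)}|$.

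\textbf{Inductive step $t\to t+1$.} Let $\hat\bfv=\hat\bfv^{(t)}$ and $\beta:=|\langle\hat\bfv,\bfv\rangle|$. Then $\bfr_\bfu=\rho\beta\bfu+\bfW\hat\bfv$. By Wedin on the current supports,
\[
\beta\ \ge\ \sqrt{E_\bfv}\,\sqrt{1-O(\epsilon^2/(\rho^2E_\bfu E_\bfv))}.
\]
Compare the selected set $S$ of size $t_\bfu'=\min(t+1,k_\bfu)$ with the oracle top set $T$. Since $\|(\bfr_\bfu)_S\|\ge\|(\bfr_\bfu)_T\|$ and $\bfW\hat\bfv$ restricted to $S\cup T$ is bounded by $\epsilon(2t_\bfu',t_\bfv)$ (this uses only the sparsity of $\hat\bfv$), we obtain
\[
\rho\beta\bigl(\|\bfu_S\|-\|\bfu_T\|\bigr)\ \ge\ -2\epsilon,
\qquad\text{so}\qquad
\sqrt{E_\bfu^{(t+1)}}\ \ge\ s_\bfu(t_\bfu')^{-1/2}-\frac{2\epsilon}{\rho\beta}.
\]
The $\bfv$-side is symmetric, using the updated $\beta$ built from $\hat\bfu^{(t)}$.

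Multiplying the two sides, the product invariant survives provided
\[
\frac{\epsilon}{\rho}\ \lesssim\ (1-\sqrt\gamma)\,\gamma\, s_\bfu^{-1/2}s_\bfv^{-1/2}.
\]
Squaring, this is exactly \eqref{eq:m-bound-precise}, with constant $C_1\propto(1+\rho)^2/(\rho^2\gamma^2(1-\sqrt\gamma)^2)$. The factor $(1+\rho)$ comes from bounding $\|\bhSigma_{xy}\|$ on sparse blocks.

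\textbf{Main obstacle.} The hard part is making the invariant \emph{self-sustaining across views}. The loss on the $\bfu$-side scales as $\epsilon/(\rho\beta)$ with $\beta\approx\sqrt{E_\bfv}$, so an additive loss must be turned into a multiplicative $\gamma$ factor on the product. I would first try tracking the quantity $\sqrt{E_\bfu E_\bfv}\ge\sqrt\gamma\,(s_\bfu s_\bfv)^{-1/2}$ directly. Multiplying the $\bfu$-side inequality by $\sqrt{E_\bfv}$ turns the error into $2\epsilon/\rho$, independent of the individual energies, and this is what makes the coupled profile $s_\bfu s_\bfv$, rather than each profile separately, appear.

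A second technicality arises when $t\ge k_\bfu$ but $t<k_\bfv$. There the $\bfu$ support saturates, and I would check that the monotonicity of $s_\bfu$ keeps the invariant valid.
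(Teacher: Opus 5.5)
There is a genuine gap in your base case, at exactly the step the argument hinges on. After the $1\times1$ step, top-1 reselection gives only $|u_{i_1}|\ge|u_{(1)}|-2\epsilon/(\rho|v_{j_0}|)$. To certify any energy invariant at $t=1$ from this (your product version included, since $|v_{j_1}|\le|v_{(1)}|$), the loss must be at most $(1-\sqrt\gamma)|u_{(1)}|$, i.e.\ $2\epsilon/\rho\le(1-\sqrt\gamma)\,|u_{(1)}|\,|v_{j_0}|$.

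The inequality you invoke, $|v_{j_0}|\ge|u_{i_0}v_{j_0}|$, discards the factor $|u_{i_0}|$. It yields only $|u_{(1)}||v_{j_0}|\gtrsim|u_{(1)}|^2|v_{(1)}|$, so you would need $m\gtrsim s_\bfu(1)^2s_\bfv(1)\log n$. For flat signals that is $k_\bfu^2k_\bfv\log n$, whereas the hypothesis only provides $2k_\bfu k_\bfv\log n$ (its maximum, attained at $t=1$). Your remark that $m\gtrsim s_\bfu(1)s_\bfv(1)\log n$ makes $\epsilon/\rho\ll|u_{(1)}v_{(1)}|$ makes the loss small in absolute terms, not relative to $|u_{(1)}|$, which can be as small as $k_\bfu^{-1/2}$.

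The missing idea is the paper's decoupling inequality. Because $u_{(1)}$ is the \emph{largest} coordinate, $|u_{(1)}|\,|v_{j_0}|\ge|u_{i_0}|\,|v_{j_0}|\ge\sqrt{\gamma/(s_\bfu(1)s_\bfv(1))}$, and symmetrically $|v_{(1)}||u_{i_0}|\ge|u_{i_0}v_{j_0}|$. So the effective SNR of the proxy's peak is the joint product, and $m\gtrsim\frac{(1+\rho)^2}{\rho^2\gamma(1-\sqrt\gamma)^2}s_\bfu(1)s_\bfv(1)\log n$ suffices. This is what the paper's base case does.

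Your inductive step is also not closed as written. You correctly note that the loss $2\epsilon/(\rho\beta)$ depends on $\beta\approx\sqrt{E_\bfv^{(t)}}$ individually. But multiplying the $\bfu$-side inequality by $\sqrt{E_\bfv^{(t)}}$ bounds the mixed-time quantity $E_\bfu^{(t+1)}E_\bfv^{(t)}$. Its leading term $\sqrt{E_\bfv^{(t)}}\,s_\bfu^{-1/2}$ still needs an individual lower bound, and since supports are fully reselected, $E_\bfv^{(t+1)}$ need not dominate $E_\bfv^{(t)}$.

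The fix is one line you did not state. Any index set of size $t_\bfu$ captures at most the oracle energy, $E_\bfu^{(t)}\le 1/s_\bfu(t_\bfu)$. Hence your product invariant implies $E_\bfv^{(t)}\ge\gamma/s_\bfv(t_\bfv)$ and $E_\bfu^{(t)}\ge\gamma/s_\bfu(t_\bfu)$. Then \Cref{prop:alignment} gives $\beta\ge\gamma/\sqrt{s_\bfv(t_\bfv)}$. Your displayed condition makes each relative reselection loss at most $(1-\sqrt\gamma)/2$, and multiplying the two reselection bounds restores the product invariant at $t+1$.

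With this observation your invariant is actually slightly stronger than the paper's. The paper carries the two individual bounds $\|\bfu_{S_\bfu^{(t)}}\|_2\ge\sqrt{\gamma/s_\bfu(t_\bfu)}$ and $\|\bfv_{S_\bfv^{(t)}}\|_2\ge\sqrt{\gamma/s_\bfv(t_\bfv)}$ separately. Your final-estimation step then matches the paper's.

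One minor point: do not bound $\bfW\hat\bfv$ on $S\cup T$, because $|S\cup T|$ can reach $2k_\bfu$, which lies outside $\mcE$. Bound $S$ and $T$ separately; that is where the factor $2$ comes from.
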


The core significance of \Cref{thm:recovery_guarantee} lies in evaluating the algorithm's trajectory bottleneck rather than relying on a static worst-case assumption. Specifically, the sample complexity \eqref{eq:m-bound-precise} is driven by two interactive components: the effective noise dimension $(t_\mathbf{u} + t_\mathbf{v})$ accumulated within the active supports, and the coupled profile $s_\mathbf{u}(t_\mathbf{u})s_\mathbf{v}(t_\mathbf{v})$ capturing the joint energy concentration.

The parameter $\gamma \in (0,1)$ governs a theoretical trade-off: a larger $\gamma$ reduces the residual approximation error $\sqrt{1-\gamma}$, but inflates the constants $C_1, C_2$. For order-wise analysis, $\gamma$ is simply treated as a fixed constant (e.g., $1/2$). While $\gamma < 1$ leaves a constant residual bias, standard practices in the related SPCA literature \cite{Yuan2013TPower, Cai25PeakPCA,xu2025spca} demonstrate that such biases can be eliminated via lightweight post-processing (e.g., an alternating thresholded power iteration) to achieve the exact minimax rate. We treat this as a straightforward extension, focusing herein on breaking the computational initialization barrier.

Intuitively, for highly concentrated signals ($s_\bfu \approx 1, s_\bfv \approx 1$), the bound trivially reduces to the optimal additive rate $(k_\bfu + k_\bfv) \log n$, successfully bypassing the worst-case $k_\bfu k_\bfv \log n$ computational barrier characteristic of alternating maximization. Furthermore, the $(t_\bfu + t_\bfv)$ formulation naturally prevents runaway noise accumulation in unbalanced regimes ($k_\bfu \neq k_\bfv$). Since $(t_\bfu + t_\bfv) \asymp t$ across iterations, the sample complexity is asymptotically governed by the simplified scaling:
\begin{equation}\label{eq:m-bound-asymp}
      m \gtrsim \max_{1 \le t \le k_{\max}} t s_\bfu(t) s_\bfv(t) \log n.
\end{equation}

Crucially, this coupled product $s_\bfu(\cdot) s_\bfv(\cdot)$ reveals a structural synergy: a highly concentrated signal in one view can structurally compensate for a diffuse signal in its partner view. To analytically isolate this cooperative effect, we evaluate the balanced regime ($k_\bfu = k_\bfv = k$) under a standardized power-law model, which exposes a sharp phase transition governed entirely by the aggregate signal decay.

\begin{corollary}[Phase Transition under Balanced Profiles]\label{cor:scca_power_law_transition}
      Consider the balanced sparsity regime $k_\bfu = k_\bfv = k$. Assume $\bfu$ and $\bfv$ follow power-law decay profiles with rates $\alpha_\bfu, \alpha_\bfv \ge 0$. The sample complexity exhibits a sharp phase transition governed directly by their aggregate decay rate $\alpha_{\bfu\bfv} = \alpha_\bfu + \alpha_\bfv$:
      \begin{equation}\label{eq:phase_transition}
            m \gtrsim k^{\tau} \log n, \quad \text{where } \tau = \max(1, 2 - \alpha_{\bfu\bfv}).
      \end{equation}
      An additional $\log k$ factor arises in the boundary case $\{\alpha_\bfu, \alpha_\bfv\} = \{0, 1\}$, yielding $m \gtrsim k \log k \log n$.
\end{corollary}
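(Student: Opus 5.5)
We derive the corollary directly from \Cref{thm:recovery_guarantee}, using the simplified form \eqref{eq:m-bound-asymp}. With $k_\bfu=k_\bfv=k$ we have $t_\bfu=t_\bfv=t$ for $1\le t\le k$, and $\gamma$ is fixed, so $C_1$ is a constant. The sufficient condition therefore becomes
\[
m \gtrsim \max_{1\le t\le k} t\, s_\bfu(t)\, s_\bfv(t)\, \log n .
\]
The task reduces to computing the order of $F(t):=t\,s_\bfu(t)s_\bfv(t)$ under the power-law profiles of \Cref{def:power_law} and maximizing it over $t\in[k]$.

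\textbf{Step 1: structure functions for a power law.} Write $H_\alpha(p):=\sum_{i=1}^p i^{-\alpha}$. For a $k$-sparse unit vector with rate $\alpha$ we have $x_{(i)}^2=i^{-\alpha}/H_\alpha(k)$, so
\[
s(p)=\frac{H_\alpha(k)}{H_\alpha(p)}.
\]
Standard integral comparison gives the following orders:
\begin{itemize}
\item if $\alpha<1$, then $H_\alpha(p)\asymp p^{1-\alpha}$;
\item if $\alpha=1$, then $H_\alpha(p)\asymp \log(ep)$;
\item if $\alpha>1$, then $H_\alpha(p)\asymp 1$.
\end{itemize}
Hence $s(p)\asymp (k/p)^{1-\alpha}$ for $\alpha<1$, $s(p)\asymp \log(ek)/\log(ep)$ for $\alpha=1$, and $s(p)\asymp 1$ for $\alpha>1$. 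The constants depend only on $\alpha$, which we treat as fixed.

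\textbf{Step 2: case analysis of $\max_t F(t)$.} Set $\beta_\bullet:=\min(\alpha_\bullet,1)$. For non-boundary rates we get
\[
F(t)\asymp t\,(k/t)^{(1-\beta_\bfu)+(1-\beta_\bfv)}=k^{2-\beta_\bfu-\beta_\bfv}\,t^{\beta_\bfu+\beta_\bfv-1}.
\]
\begin{itemize}
\item If $\beta_\bfu+\beta_\bfv<1$, the exponent of $t$ is negative. The maximum is then attained at $t=1$, giving $k^{2-\alpha_{\bfu\bfv}}$; note that here both $\alpha$'s are below $1$, so $\beta=\alpha$.
\item If $\beta_\bfu+\beta_\bfv\ge 1$, the maximum is at $t=k$, where $F(k)=k$ because $s(k)=1$. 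When the exponent of $t$ is exactly $0$, $F$ is of order $k$ uniformly in $t$.
\end{itemize}
Since $\alpha_{\bfu\bfv}\ge1$ iff $\beta_\bfu+\beta_\bfv\ge1$ (if either rate is at least $1$, both sides hold), both cases combine to $\tau=\max(1,2-\alpha_{\bfu\bfv})$.

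\textbf{Step 3: logarithmic boundaries.} These are the main bookkeeping obstacle. We must track where a $\log$ appears, namely when some $\alpha=1$ exactly.
\begin{itemize}
\item If $\alpha_\bfu=1$ and $\alpha_\bfv=0$, then $F(t)\asymp t\cdot\frac{\log k}{\log(et)}\cdot\frac{k}{t}=k\log k/\log(et)$. This is maximized at $t=1$, giving $k\log k$, which is the stated extra factor.
\item If one rate is $1$ and the other $\alpha'\in(0,1)$, then $F(t)\asymp k^{1-\alpha'}t^{\alpha'}\log k/\log(et)$. Its maximum over $[1,k]$ is of order $k$: the function $t^{\alpha'}/\log(et)$ is increasing for large $t$. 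At small $t$ the term $k^{1-\alpha'}\log k$ is $o(k)$. So no extra factor appears.
\item If one rate is $1$ and the other is at least $1$, then $F(t)\lesssim t\log^2 k/\log^2(et)$ or $t\log k/\log(et)$, both maximized at order $k$ at $t=k$.
\end{itemize}
We verify the monotonicity and endpoint comparison of $t^{a}/\log^b(et)$ on $[1,k]$ to confirm that only $\{0,1\}$ produces the $\log k$ factor.

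Substituting these orders into the condition gives \eqref{eq:phase_transition}.
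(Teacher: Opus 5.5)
Your proposal is correct and follows essentially the same route as the paper's proof: reduce to $m \gtrsim \max_{t} t\,s_\bfu(t)s_\bfv(t)\log n$, insert the power-law asymptotics $s(p)\asymp (k/p)^{1-\min(\alpha,1)}$ (with the logarithmic form at $\alpha=1$), and then split into two cases. When neither rate equals $1$, the sign of the exponent of $t$ decides whether the maximum sits at $t=1$ or $t=k$; when some rate equals $1$, only a partner rate of $0$ produces the extra $\log k$. Your treatment of the $\alpha=1$ subcases is slightly more explicit than the paper's, which groups all partner rates $\alpha_\bfv>0$ together through the bound $s_\bfv(p)\lesssim (k/p)^{1-\bar\alpha_\bfv}$ and an endpoint-dominance argument for large $k$.
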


\input{figs/complexity_phase_transit.tex}
As illustrated in \Cref{fig:complexity_phase_transit}, \Cref{cor:scca_power_law_transition} demonstrates how the coupled signal structure dictates the sample complexity (ignoring the boundary case):
\begin{itemize}
      \item \textbf{Super-linear Regime ($\alpha_{\bfu\bfv} < 1$):} When the signals lack sufficient concentration, the sample complexity scales super-linearly as $k^{2-\alpha_{\bfu\bfv}}$. In the worst-case flat regime ($\alpha_\bfu = \alpha_\bfv = 0$), this recovers the standard quadratic scaling $k^2 \log n$.

      \item \textbf{Linear Regime ($\alpha_{\bfu\bfv} \ge 1$):} When the aggregate decay is sufficient, the complexity saturates at the optimal linear rate $k \log n$. Notably, this optimal rate is achievable (up to a $\log k$ factor) even if one view is completely flat ($\alpha_\bfu = 0$), provided that the other view is highly concentrated ($\alpha_\bfv \ge 1$). This illustrates the compensatory mechanism of the bilateral pursuit, where strong structural decay in one view offsets the lack of concentration in its partner.
\end{itemize}
This theoretical phase transition boundary is empirically validated in \Cref{sec:exp-synergistic-gain} (\Cref{fig:exp-phase-transition}). From a signal processing perspective, this synergistic phase transition formalizes a compensatory mechanism inherent in multi-view systems, such as multi-sensor fusion. When one view observes a highly concentrated signal (e.g., $\alpha_\bfv \ge 1$), its estimated weight vector effectively functions as a high-fidelity spatial filter. Applying this well-aligned filter to the cross-covariance matrix strongly attenuates the ambient cross-term noise. This suppression effect reliably recovers a diffuse, weakly structured signal ($\alpha_\bfu \approx 0$) in the partner view, an adversarial regime where standard profile-blind truncation would typically fail due to excessive noise accumulation. Furthermore, for signals exhibiting strictly faster energy decay, such as the exponential profiles evaluated in \Cref{sec:experiments}, their structure functions rapidly saturate to $1$. Such highly concentrated signals trivially satisfy the conditions for strong synergy, directly achieving the optimal linear sample complexity without hitting the worst-case condition.

%% file: figs/complexity_phase_transit.tex
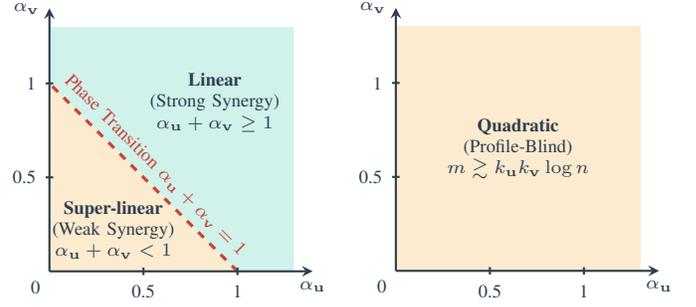
\begin{figure}[t]
  \centering

  \begin{minipage}{0.48\columnwidth}
    \centering
    \begin{tikzpicture}[scale=2.5, >=stealth]

      \def\xmax{1.3}
      \def\ymax{1.3}

      \fill[linearColor] (1,0) -- (\xmax,0) -- (\xmax,\ymax) -- (0,\ymax) -- (0,1) -- cycle;
      \fill[superlinearColor] (0,0) -- (1,0) -- (0,1) -- cycle;

      \draw[very thick, boundaryRed, dashed] (1,0) -- (0,1);

      \node[textDark, align=center, font=\scriptsize] at (0.34, 0.22) {
        \textbf{Super-linear}\\
        (Weak Synergy)\\
        $\alpha_\bfu + \alpha_\bfv < 1$
      };

      \node[textDark, align=center, font=\scriptsize, fill=linearColor, inner sep=2pt] at (0.88, 0.9) {
        \textbf{Linear}\\
        (Strong Synergy)\\
        $\alpha_\bfu + \alpha_\bfv \ge 1$
      };

      \node[boundaryRed, rotate=-45, font=\scriptsize, anchor=south] at (0.5, 0.5) {Phase Transition $\alpha_\bfu + \alpha_\bfv = 1$};

      \draw[->, thick, textDark] (0,0) -- (\xmax+0.1, 0) node[below, font=\scriptsize] {$\alpha_\bfu$};
      \draw[->, thick, textDark] (0,0) -- (0, \ymax+0.1) node[left, font=\scriptsize] {$\alpha_\bfv$};

      \foreach \x in {0.5, 1}
      \draw[thick, textDark] (\x, 0.02) -- (\x, -0.02) node[below, font=\scriptsize] {\x};
      \foreach \y in {0.5, 1}
      \draw[thick, textDark] (0.02, \y) -- (-0.02, \y) node[left, font=\scriptsize] {\y};
      \node[below left, textDark, font=\scriptsize] at (0,0) {0};

    \end{tikzpicture}
    \\[1ex]
    {\small (a) Bi-SEP (Ours)}
  \end{minipage}\hfill
  \begin{minipage}{0.48\columnwidth}
    \centering
    \begin{tikzpicture}[scale=2.5, >=stealth]

      \def\xmax{1.3}
      \def\ymax{1.3}

      \fill[superlinearColor] (0,0) rectangle (\xmax,\ymax);

      \node[textDark, align=center, font=\scriptsize] at (\xmax/2, \ymax/2) {
        \textbf{Quadratic}\\
        (Profile-Blind)\\
        $m \gtrsim \ku\kv \log n$
      };

      \draw[->, thick, textDark] (0,0) -- (\xmax+0.1, 0) node[below, font=\scriptsize] {$\alpha_\bfu$};
      \draw[->, thick, textDark] (0,0) -- (0, \ymax+0.1) node[left, font=\scriptsize] {$\alpha_\bfv$};

      \foreach \x in {0.5, 1}
      \draw[thick, textDark] (\x, 0.02) -- (\x, -0.02) node[below, font=\scriptsize] {\x};
      \foreach \y in {0.5, 1}
      \draw[thick, textDark] (0.02, \y) -- (-0.02, \y) node[left, font=\scriptsize] {\y};
      \node[below left, textDark, font=\scriptsize] at (0,0) {0};

    \end{tikzpicture}
    \\[1ex]
    {\small (b) Standard Hard-Thresholding}
  \end{minipage}

  \caption{Sample complexity scaling regimes under power-law decay profiles. \textbf{(a)} Bi-SEP explicitly leverages signal energy concentration, inducing a sharp phase transition. The optimal linear regime (green) is accessible even if one view is weakly structured, provided that the partner view offers sufficient compensation ($\alpha_\bfu + \alpha_\bfv \ge 1$). \textbf{(b)} Standard hard-thresholding methods are profile-blind. Their theoretical complexity is uniformly trapped in the worst-case quadratic scaling across the entire parameter space, failing to exploit any intrinsic signal synergy.}
  \label{fig:complexity_phase_transit}
\end{figure}

%% file: 04Proofs.tex
\section{Proofs of Main Results}
\label{sec:proofs}

In this section, we present the formal proofs of our main theoretical results. We begin by establishing a series of foundational lemmas and propositions that characterize the uniform noise landscape, the signal structure properties, and the iterative dynamics of the Bi-SEP algorithm. These elements serve as the essential building blocks for our main inductive proof. The detailed mathematical proofs of all supporting lemmas and propositions in this subsection are deferred to the appendices.

\subsection{Foundational Propositions and Lemmas}
We begin by formally stating the probabilistic guarantee for the uniform noise event $\mathcal{E}(t)$ introduced in \Cref{sec:results}.
\begin{proposition}[Uniform spectral norm control over sparse blocks]\label{prop:union_event}
      Under the condition that $m \gtrsim (|S_1|+|S_2|)\log n+t$, there exists a constant $c>0$ such that $\mathbb P(\mcE(t)) \ge 1-2e^{-ct}$.
      Particularly, letting $t=c'\log n$, the uniform noise event $\mcE$ holds with probability at least $1-2n^{-c}$.
\end{proposition}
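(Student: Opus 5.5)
The plan is to fix a pair of supports $(S_1,S_2)$, bound the spectral norm of the corresponding block of $\bfW$ by an $\epsilon$-net argument, and then take a union bound over all supports of the allowed sizes.

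\textbf{Fixed block.} Fix $S_1,S_2$ with $|S_1|=s_1\le k_\bfu$ and $|S_2|=s_2\le k_\bfv$. Then
\[
\bfW_{S_1,S_2}=\frac1m\sum_{i=1}^m\big(\bfx_{i,S_1}\bfy_{i,S_2}^\top-\mathbb E[\bfx_{S_1}\bfy_{S_2}^\top]\big).
\]
For unit vectors $\bfa\in\bbR^{s_1}$ and $\bfb\in\bbR^{s_2}$, the scalar $\bfa^\top\bfW_{S_1,S_2}\bfb$ is an average of $m$ i.i.d.\ centered products $(\bfa^\top\bfx_{i,S_1})(\bfb^\top\bfy_{i,S_2})$. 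Each factor is a standard Gaussian because $\bSigma_{xx}=\bSigma_{yy}=\bfI$. The product of two sub-Gaussians is sub-exponential with $\psi_1$-norm $O(1)$, uniformly in $\rho$. Bernstein's inequality then gives
\[
\mathbb P\big(|\bfa^\top\bfW_{S_1,S_2}\bfb|>\epsilon\big)\le 2\exp\big(-cm\min(\epsilon^2,\epsilon)\big).
\]

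Next take $1/4$-nets $\mathcal N_1$ and $\mathcal N_2$ of the unit spheres, of sizes at most $9^{s_1}$ and $9^{s_2}$. The standard bilinear-form argument gives
\[
\|\bfW_{S_1,S_2}\|_2\le 2\max_{\bfa\in\mathcal N_1,\bfb\in\mathcal N_2}|\bfa^\top\bfW_{S_1,S_2}\bfb|.
\]
A union bound over the $9^{s_1+s_2}$ pairs yields
\[
\mathbb P\Big(\|\bfW_{S_1,S_2}\|_2>C\sqrt{\tfrac{u}{m}}\Big)\le 2\cdot 9^{s_1+s_2}e^{-c'u},
\]
provided $C\sqrt{u/m}\le 1$, so that we are in the sub-Gaussian regime $\min(\epsilon^2,\epsilon)=\epsilon^2$ of Bernstein.

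\textbf{Union over supports.} It suffices to treat supports of maximal size for each pair $(s_1,s_2)$, since $\|\bfW_{S_1,S_2}\|_2$ is monotone under enlarging the supports. The catch is that the threshold in $\mcE(t)$ depends on $|S_1|+|S_2|$, and enlarging the supports raises the threshold. So I will union directly over every pair $(s_1,s_2)$ and every support with exactly those sizes. The number of supports with sizes $(s_1,s_2)$ is at most $\binom{n}{s_1}\binom{n}{s_2}\le e^{(s_1+s_2)\log n}$. Set
\[
u=(s_1+s_2)\log n+t
\]
and choose $C$ large enough that $c'u$ exceeds $(s_1+s_2)(\log n+\log 9)+(s_1+s_2)\log n+t$ by a margin. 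The failure probability for a given size pair is then at most
\[
2e^{-t}e^{-(s_1+s_2)\log n}.
\]
Summing over $s_1,s_2\ge1$ gives $\sum_{s_1,s_2}n^{-(s_1+s_2)}\le 1$ for $n\ge2$ after absorbing constants, so the total failure probability is at most $2e^{-ct}$.

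\textbf{Main obstacle: the sub-exponential tail.} Bernstein produces a $\sqrt{u/m}$ rate only when $u\lesssim m$. This is exactly the hypothesis $m\gtrsim(|S_1|+|S_2|)\log n+t$, taken for the largest sizes $s_1=k_\bfu$, $s_2=k_\bfv$. Under that hypothesis, the $u/m$ term is dominated by $\sqrt{u/m}$ for every size pair simultaneously. Finally, setting $t=c'\log n$ gives probability at least $1-2n^{-c}$.
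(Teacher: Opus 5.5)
Your proposal is correct and follows essentially the same route as the paper. Both arguments use Bernstein's inequality for the sub-exponential products $(\bfa^\top\bfx_{i,S_1})(\bfb^\top\bfy_{i,S_2})$, $1/4$-nets of size $9^{s}$ with the factor-$2$ bilinear approximation, a union bound over supports of each size pair, and the sample-size condition to stay in the $\sqrt{u/m}$ regime. The only cosmetic difference is how the union over size pairs is absorbed: you use the convergent series $\sum_{s_1,s_2} n^{-(s_1+s_2)}$, whereas the paper shifts $t \to t+\log(k_\bfu k_\bfv)$.
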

Conditioned on this high-probability event $\mcE$, the specific sample complexity at each iteration is dictated by the signal's energy concentration.

To streamline the subsequent inductive analysis, we then introduce a lemma which shows that shifting indices by a single algorithmic step (e.g., $t$ vs. $t+1$) affects the sample complexity bounds only by an absolute constant factor. This allows us to absorb specific transition constants into a global constant $C$.
\begin{lemma}[Asymptotically equivalent conditions, {\cite[Lemma 1]{xu2025spca}}]\label{lem:asymp-equiv}
      Let $s(\cdot)$ be a signal-energy structure function as per Definition \ref{def:structure_function}. The following conditions are equivalent in the asymptotic sense:
      \begin{enumerate}[label=(\arabic*)]
            \item \(m\ge C_1\, ps(p)\);
            \item \(m \ge C_2\, (p+1)s(p)\);
            \item \(m \ge C_3\, ps(p+1)\).
      \end{enumerate}
      In other words, if one of these conditions holds for some absolute constant \(C_i>0\), then the other two also hold for some (different) absolute constants \(C_j>0\).
\end{lemma}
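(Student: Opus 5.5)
The plan is to reduce all three conditions to two elementary comparison inequalities for the structure function $s(\cdot)$ of Definition~\ref{def:structure_function}, written for a generic unit vector $\bfx$. Once each of the quantities $p\,s(p)$, $(p+1)s(p)$ and $p\,s(p+1)$ is shown to lie within a factor of $2$ of the others, each condition implies the other two after adjusting the absolute constant.

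\emph{Monotonicity.} The partial energies $E(p):=\sum_{i=1}^{p} x_{(i)}^2$ are nondecreasing in $p$, so $s(p)=E(p)^{-1}$ is nonincreasing, and $s(p+1)\le s(p)$. We need $E(p)>0$ for $s$ to be finite. This holds because $\bfx$ is a nonzero unit vector and the entries are sorted by magnitude, so $x_{(1)}^2>0$.

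\emph{Reverse comparison (the only non-trivial step).} Because the entries are sorted, $x_{(p+1)}^2\le x_{(i)}^2$ for every $i\le p$. Averaging over $i\le p$ gives $x_{(p+1)}^2\le E(p)/p$. Hence $E(p+1)\le (1+1/p)E(p)$, which is the same as $s(p)\le \frac{p+1}{p}\,s(p+1)\le 2\,s(p+1)$ for $p\ge 1$. Equivalently, $p\,s(p)\le (p+1)s(p+1)$. This is where the decay ordering of the sorted entries is genuinely used.

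\emph{Assembling the chain.} Since $p\le p+1\le 2p$,
\begin{equation*}
p\,s(p+1)\ \le\ p\,s(p)\ \le\ (p+1)s(p)\ \le\ 2p\,s(p)\ \le\ 4p\,s(p+1).
\end{equation*}
The implications then follow directly:
\begin{itemize}
\item (2) with constant $C_2$ implies (1) with $C_1=C_2$.
\item (1) implies (2) with $C_2=C_1/2$.
\item (1) implies (3) with $C_3=C_1$.
\item (3) implies (1) with $C_1=C_3/2$.
\item Combining these, (2) and (3) are equivalent with constants that differ by a factor of at most $4$.
\end{itemize}
None of these constants depend on $p$, $n$, or the profile of $\bfx$, which is exactly the asymptotic equivalence claimed.

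The main obstacle is the reverse comparison $s(p)\lesssim s(p+1)$. Without the sorted-magnitude structure, a single extra coordinate could carry arbitrarily more energy than the first $p$ combined. The averaging argument shows the sorting rules this out, and everything else is bookkeeping of constants.
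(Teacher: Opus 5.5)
Your proof is correct and follows essentially the same route as the paper. Both rest on the monotonicity $s(p+1)\le s(p)$ and the reverse comparison $p\,s(p)\le (p+1)s(p+1)\le 2p\,s(p+1)$, and they give the same constants ($C_2=C_1/2$ and $C_1=C_3/2$). Your averaging step $x_{(p+1)}^2\le \frac{1}{p}\sum_{i=1}^{p}x_{(i)}^2$ usefully justifies the reverse comparison, which the paper asserts without proof.
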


Next, we establish the three core components that govern the algorithmic execution of Bi-SEP. To initialize the algorithmic induction, we first quantify the energy securely captured during the zero-step refinement.

\begin{proposition}[Initialization Guarantee]
      \label{prop:initialization}
      Condition on the high-probability event $\mcE$. Let $(i_0, j_0) = \arg\max_{i,j} |(\hat{\Sigma}_{xy})_{ij}|$ be the indices selected in the initialization step. For any $\gamma \in (0,1)$, if the sample size satisfies
      \begin{equation}\label{eq:initial-t-0-sample}
            m \ge \frac{C (1+\rho)^2}{\rho^2 (1-\sqrt{\gamma})^2} s_\bfu(1) s_\bfv(1) \log n,
      \end{equation}
      then the selected coordinates capture sufficient signal energy product:
      \begin{equation}
            |u_{i_0}| |v_{j_0}| \ge \sqrt{\frac{\gamma}{s_\bfu(1) s_\bfv(1)}}.
      \end{equation}
\end{proposition}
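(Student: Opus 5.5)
The plan is a direct comparison between the selected entry and the entry built from the true top coordinates. Let $(i^*, j^*)$ be the indices of the largest magnitude entries of $\bfu$ and $\bfv$, so that $|u_{i^*}||v_{j^*}| = |u_{(1)}||v_{(1)}| = (s_\bfu(1)s_\bfv(1))^{-1/2}$. Every entry can be written as $(\bhSigma_{xy})_{ij} = \rho u_i v_j + W_{ij}$. On $\mcE$, applying the uniform bound with $|S_1|=|S_2|=1$ gives $|W_{ij}| \le \epsilon := C\sqrt{(2\log n + c'\log n)/m} \asymp \sqrt{\log n/m}$ for every $(i,j)$. A $1\times 1$ block is admissible in \eqref{eq:union_event} because $k_\bfu,k_\bfv\ge 1$.

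The key chain of inequalities uses only the maximality of $(i_0,j_0)$ and the triangle inequality:
\begin{equation*}
\rho|u_{i_0}||v_{j_0}| + \epsilon \ \ge\ |(\bhSigma_{xy})_{i_0 j_0}| \ \ge\ |(\bhSigma_{xy})_{i^* j^*}| \ \ge\ \rho|u_{(1)}||v_{(1)}| - \epsilon .
\end{equation*}
It follows that $|u_{i_0}||v_{j_0}| \ge (s_\bfu(1)s_\bfv(1))^{-1/2} - 2\epsilon/\rho$. It then suffices to have $2\epsilon/\rho \le (1-\sqrt{\gamma})(s_\bfu(1)s_\bfv(1))^{-1/2}$, because then the right-hand side is at least $\sqrt{\gamma}\,(s_\bfu(1)s_\bfv(1))^{-1/2} = \sqrt{\gamma/(s_\bfu(1)s_\bfv(1))}$. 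Squaring this requirement and inserting $\epsilon^2 \asymp \log n/m$ gives $m \ge C\,\rho^{-2}(1-\sqrt{\gamma})^{-2}\,s_\bfu(1)s_\bfv(1)\log n$. This is implied by \eqref{eq:initial-t-0-sample}, since $(1+\rho)^2\ge 1$. The $(1+\rho)^2$ factor appears only to keep the constants uniform with later steps, where $\|\bfW\|$ and $\rho$ combine.

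There is no real obstacle in this argument. The one point needing care is the bookkeeping of the uniform event $\mcE$. I need to confirm that the bound on $\|\bfW_{S_1,S_2}\|_2$ for singletons controls every entry simultaneously, which it does because the event $\mcE$ is defined as an intersection over all such blocks. I also need to check that the constants $C, c'$ absorb the factor $3$ coming from $(|S_1|+|S_2|)\log n + c'\log n$. If one prefers to avoid the factor $2$ in $2\epsilon$, the same conclusion holds after adjusting $C$.
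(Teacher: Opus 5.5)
Your proof is correct and follows the paper's argument essentially line by line. Like the paper, you compare the maximal entry $(i_0,j_0)$ with the entry $(i^*,j^*)$ at the true peaks, use the triangle and reverse triangle inequalities with the entrywise noise bound obtained by specializing $\mcE$ to $1\times 1$ blocks, and absorb $2\max_{i,j}|W_{ij}|/\rho$ into $(1-\sqrt{\gamma})/\sqrt{s_\bfu(1)s_\bfv(1)}$ via the sample-size condition. The only cosmetic difference is that the paper writes the entrywise bound as $C(1+\rho)\sqrt{\log n/m}$. You instead track the constant directly from the definition of $\mcE$, and you correctly observe that the $(1+\rho)^2$ factor is then harmless slack.
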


Following initialization, the iterative expansion relies on a bilateral coupling mechanism. The first component dictates how well a restricted estimator aligns with the true canonical vector in the presence of bounded cross-covariance noise.

\begin{proposition}[Alignment-Energy Coupling]
      \label{prop:alignment}
      Let $\hat{\bfu}$ and $\hat{\bfv}$ be the unit leading left and right singular vectors of the restricted empirical cross-covariance matrix $(\bhSigma_{xy})_{S_\bfu, S_\bfv}$, respectively, padded with zeros outside their respective supports. Specifically, $\|\hat{\bfu}\|_2 = \|\hat{\bfv}\|_2 = 1$, and $\text{supp}(\hat{\bfu}) \subseteq S_\bfu$, $\text{supp}(\hat{\bfv}) \subseteq S_\bfv$. If the spectral gap condition $\rho \|\bfu_{S_\bfu}\|_2 \|\bfv_{S_\bfv}\|_2 > 2\|\bfW_{S_\bfu, S_\bfv}\|_2$ holds, then the alignment of the estimators with the true vectors satisfies
      \begin{equation}\label{eq:u-align}
            |\langle \hat{\bfu}, \bfu \rangle| \ge \|\bfu_{S_\bfu}\|_2 \sqrt{1 - \left(\frac{2\|\bfW_{S_\bfu, S_\bfv}\|_2}{\rho \|\bfu_{S_\bfu}\|_2 \|\bfv_{S_\bfv}\|_2}\right)^2}
      \end{equation}
      and
      \begin{equation}\label{eq:v-align}
            |\langle \hat{\bfv}, \bfv \rangle| \ge \|\bfv_{S_\bfv}\|_2 \sqrt{1 - \left(\frac{2\|\bfW_{S_\bfu, S_\bfv}\|_2}{\rho \|\bfu_{S_\bfu}\|_2 \|\bfv_{S_\bfv}\|_2}\right)^2}.
      \end{equation}
\end{proposition}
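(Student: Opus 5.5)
Write $\bfA := (\bhSigma_{xy})_{S_\bfu,S_\bfv} = \rho\, \bfu_{S_\bfu}\bfv_{S_\bfv}^\top + \bfW_{S_\bfu,S_\bfv}$. Set $a := \|\bfu_{S_\bfu}\|_2$, $b := \|\bfv_{S_\bfv}\|_2$, $\epsilon := \|\bfW_{S_\bfu,S_\bfv}\|_2$ and $\delta := \rho a b$. The population block is then rank one, with leading singular value $\delta$ and singular vectors $\bar\bfu := \bfu_{S_\bfu}/a$ and $\bar\bfv := \bfv_{S_\bfv}/b$. Since $\hat\bfu$ is supported on $S_\bfu$, we have $\langle\hat\bfu,\bfu\rangle = \langle \hat\bfu, \bfu_{S_\bfu}\rangle = a\langle\hat\bfu,\bar\bfu\rangle$. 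It therefore suffices to show
\[
|\langle\hat\bfu,\bar\bfu\rangle| \ge \sqrt{1-(2\epsilon/\delta)^2},
\]
and the symmetric bound for $\hat\bfv$. Equivalently, we need $\sin\angle(\hat\bfu,\bar\bfu)\le 2\epsilon/\delta$.

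I would prove this directly from the variational characterization rather than cite Wedin's theorem, because a direct argument gives the constant 2 cleanly. First, by optimality of the leading singular pair,
\[
\sigma_1(\bfA) = \hat\bfu^\top \bfA\hat\bfv \ge \bar\bfu^\top \bfA\bar\bfv \ge \delta - \epsilon.
\]
Second, expanding the left side gives
\[
\hat\bfu^\top \bfA \hat\bfv = \delta\langle\hat\bfu,\bar\bfu\rangle\langle\hat\bfv,\bar\bfv\rangle + \hat\bfu^\top\bfW\hat\bfv \le \delta\, |\langle\hat\bfu,\bar\bfu\rangle|\,|\langle\hat\bfv,\bar\bfv\rangle| + \epsilon .
\]
Combining the two yields the product bound $|\langle\hat\bfu,\bar\bfu\rangle||\langle\hat\bfv,\bar\bfv\rangle| \ge 1-2\epsilon/\delta$. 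This controls only the product, so a separate step is needed to decouple the two factors.

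For the decoupling I would use the singular vector equation $\sigma_1 \hat\bfu = \bfA\hat\bfv = \delta\langle\bar\bfv,\hat\bfv\rangle\bar\bfu + \bfW\hat\bfv$. Projecting onto the orthogonal complement of $\bar\bfu$ gives $\sigma_1 \|P_{\bar\bfu}^\perp\hat\bfu\|_2 \le \epsilon$, so
\[
\sin\angle(\hat\bfu,\bar\bfu) \le \frac{\epsilon}{\sigma_1} \le \frac{\epsilon}{\delta-\epsilon}.
\]
The symmetric argument with $\bfA^\top\hat\bfu$ handles $\hat\bfv$. Under the assumed gap $\delta>2\epsilon$ we have $\delta-\epsilon > \delta/2$, hence $\epsilon/(\delta-\epsilon) < 2\epsilon/\delta$. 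This gives $\sin\angle\le 2\epsilon/\delta$, i.e.\ $\cos\angle \ge \sqrt{1-(2\epsilon/\delta)^2}$. Multiplying by $a$ (respectively $b$) yields \eqref{eq:u-align} and \eqref{eq:v-align}.

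The step needing most care is the sign and degeneracy bookkeeping. The singular vectors are defined only up to a joint sign, and the statement uses absolute values, so this is harmless. One must also check that $\sigma_1 > 0$, which holds because $\sigma_1 \ge \delta-\epsilon > \epsilon \ge 0$ by the gap condition. The remaining work is routine once the eigen-equation projection is in place. Although the bound $\epsilon/(\delta-\epsilon)$ is sharper, I would state it in the looser $2\epsilon/\delta$ form so that it matches the proposition.
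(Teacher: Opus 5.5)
Your proof is correct, and its skeleton matches the paper's. In both, you restrict to the block and identify $\bar{\bfu}=\bfu_{S_\bfu}/\|\bfu_{S_\bfu}\|_2$ and $\bar{\bfv}$ as the singular vectors of the rank-one signal part. You then bound $\sin\angle(\hat{\bfu},\bar{\bfu})\le 2\epsilon/\delta$, pass to the cosine, and multiply by $\|\bfu_{S_\bfu}\|_2$.

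The difference is in the key perturbation step. The paper invokes its rank-one Wedin/Weyl result (\Cref{cor:wedin_rank1_weyl}). There the gap $\min\{\tilde{\sigma}_1,\sigma_1-\tilde{\sigma}_2\}$ is lower bounded via Weyl by $\delta-\epsilon$. You instead prove the bound directly. Projecting $\sigma_1(\bfA)\hat{\bfu}=\bfA\hat{\bfv}$ onto $\bar{\bfu}^\perp$ removes the signal term completely, so $\sin\angle(\hat{\bfu},\bar{\bfu})\le\epsilon/\sigma_1(\bfA)$. The variational inequality $\sigma_1(\bfA)\ge\bar{\bfu}^\top\bfA\bar{\bfv}\ge\delta-\epsilon$ then does the job of Weyl.

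What each route buys:
\begin{itemize}
\item Your route is self-contained and elementary. It also shows that in the rank-one case only the observed leading singular value $\sigma_1(\bfA)$ matters, with no control of $\tilde{\sigma}_2$ needed.
\item The paper's route is shorter on the page. It also reuses the same corollary for the final-error step in the proof of \Cref{thm:recovery_guarantee}, which your direct argument would cover equally well.
\end{itemize}

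Two cosmetic points:
\begin{itemize}
\item The product bound $|\langle\hat{\bfu},\bar{\bfu}\rangle|\,|\langle\hat{\bfv},\bar{\bfv}\rangle|\ge 1-2\epsilon/\delta$ is never used and can be dropped.
\item The inequality $\epsilon/(\delta-\epsilon)<2\epsilon/\delta$ should be non-strict, since both sides vanish when $\epsilon=0$.
\end{itemize}
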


The second component of the iterative cycle ensures that a well-aligned estimate in one view enables the proxy vector to accurately capture the unrecovered signal energy in the partner view.

\begin{proposition}[Cross-Coordinate Energy Boosting]
      \label{prop:reselection}
      Consider the update of vector $\hat{\bfu}^{(t)} \to \hat{\bfu}^{(t+1)}$ at step $t$. Let $\hat{\bfv}^{(t)}$ be the current partner estimator supported on $S_\bfv^{(t)}$ with alignment $|\langle \bfv, \hat{\bfv}^{(t)} \rangle|$. Recall that $S_\bfu^{(t+1)}$ is the support selected by the proxy $\bfr_\bfu = \bhSigma_{xy} \hat{\bfv}^{(t)}$. Condition on event $\mcE$. Then we have
      \begin{align}
            \nonumber & \|\bfu_{S_\bfu^{(t+1)}}\|_2                                                                                                                                                                  \\
            \ge ~     & \sqrt{\frac{1}{s_\bfu(|S_\bfu^{(t+1)}|)}} - \frac{2C(1+\rho)}{\rho |\langle \bfv, \hat{\bfv}^{(t)} \rangle|} \sqrt{\frac{(|S_\bfu^{(t+1)}| + |S_\bfv^{(t)}|) \log n}{m}}.\label{eq:u-vbound}
      \end{align}
      Similarly, for the symmetric update $\hat{\bfv}^{(t)} \to \hat{\bfv}^{(t+1)}$, we have
      \begin{align}
            \nonumber & \|\bfv_{S_\bfv^{(t+1)}}\|_2                                                                                                                                                                  \\
            \ge ~     & \sqrt{\frac{1}{s_\bfv(|S_\bfv^{(t+1)}|)}} - \frac{2C(1+\rho)}{\rho |\langle \bfu, \hat{\bfu}^{(t)} \rangle|} \sqrt{\frac{(|S_\bfu^{(t)}| + |S_\bfv^{(t+1)}|) \log n}{m}}.\label{eq:v-ubound}
      \end{align}
\end{proposition}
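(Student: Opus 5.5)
We must prove Proposition (Cross-Coordinate Energy Boosting), the lower bound on $\|\bfu_{S_\bfu^{(t+1)}}\|_2$. The approach compares the proxy $\bfr_\bfu=\bhSigma_{xy}\hat\bfv^{(t)}$ with its population part. Write
\begin{equation*}
\bfr_\bfu=\rho\langle\bfv,\hat\bfv^{(t)}\rangle\,\bfu+\bfW\hat\bfv^{(t)} .
\end{equation*}
Set $S=S_\bfu^{(t+1)}$, $s=|S|$, and let $T$ be the indices of the $s$ largest entries of $|\bfu|$, so $\|\bfu_T\|_2=s_\bfu(s)^{-1/2}$. Since $S$ collects the top-$s$ entries of $|\bfr_\bfu|$, we have $\|(\bfr_\bfu)_S\|_2\ge\|(\bfr_\bfu)_T\|_2$.

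The key step is to expand both sides with the triangle inequality. Put $a=\rho|\langle\bfv,\hat\bfv^{(t)}\rangle|$. Then
\begin{equation*}
a\|\bfu_S\|_2+\|(\bfW\hat\bfv^{(t)})_S\|_2\ \ge\ a\|\bfu_T\|_2-\|(\bfW\hat\bfv^{(t)})_T\|_2 .
\end{equation*}
Because $\hat\bfv^{(t)}$ is a unit vector supported on $S_\bfv^{(t)}$, we have $\|(\bfW\hat\bfv^{(t)})_S\|_2\le\|\bfW_{S,S_\bfv^{(t)}}\|_2$, and likewise for $T$. On $\mcE$ (with $t=c'\log n$ absorbed into the $\log n$ term), both are at most $C\sqrt{(s+|S_\bfv^{(t)}|)\log n/m}$, using $|T|=|S|\le k_\bfu$. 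The uniformity of $\mcE$ over all supports is exactly what lets us handle the data-dependent $S$ and the oracle set $T$. Dividing by $a$ gives
\begin{equation*}
\|\bfu_S\|_2\ge s_\bfu(s)^{-1/2}-\frac{2C}{\rho|\langle\bfv,\hat\bfv^{(t)}\rangle|}\sqrt{\frac{(s+|S_\bfv^{(t)}|)\log n}{m}},
\end{equation*}
which is stronger than the claim, since $2C\le 2C(1+\rho)$.

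The one point needing care is that $\hat\bfv^{(t)}$ depends on the data, so $\bfW\hat\bfv^{(t)}$ is not a fixed-direction noise term. This is handled by bounding through the spectral norm of the restricted block rather than through concentration of the vector itself. The $(1+\rho)$ factor presumably comes from the paper's normalization of $\bfW$ (where $\bhSigma_{xy}$ has entries of scale $1+\rho$), so it is harmless to carry it along.

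The symmetric bound for $\bfv$ follows identically using $\bfr_\bfv=\bhSigma_{xy}^\top\hat\bfu^{(t)}$, $\bfW^\top$, and the blocks $\bfW_{S_\bfu^{(t)},S_\bfv^{(t+1)}}$. The only step I expect to need checking is that $\mcE$ controls the blocks of the needed sizes, $|S|\le k_\bfu$ and $|S_\bfv^{(t)}|\le k_\bfv$, which holds by the saturating selection rule.
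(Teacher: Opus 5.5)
Your proposal is correct and follows the paper's proof essentially step for step. It uses the same oracle set $T^*$ of the top-$|S_\bfu^{(t+1)}|$ entries of $\bfu$, the same selection-optimality comparison $\|(\bfr_\bfu)_{S_\bfu^{(t+1)}}\|_2 \ge \|(\bfr_\bfu)_{T^*}\|_2$, the same two-sided triangle inequality on $\bfr_\bfu = \rho\langle \bfv,\hat\bfv^{(t)}\rangle \bfu + \bfW\hat\bfv^{(t)}$, and the same uniform control $\|(\bfW\hat\bfv^{(t)})_K\|_2 \le \|\bfW_{K,S_\bfv^{(t)}}\|_2$ via $\mcE$. The paper simply writes the constant of $\mcE$ as $C(1+\rho)$, reflecting the $\rho$-dependence of its sub-exponential constant, so your claimed ``stronger'' bound with $2C$ is only a relabeling of the generic constant.
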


Finally, to analytically instantiate our general sample complexity bounds for specific signal structures (as required for the proof of the phase transition corollary), we formalize the explicit asymptotic growth rates of power-law structure functions.

\begin{lemma}[Structure function of power-law signals, {\cite[Proposition 2]{xu2025spca}}]
      \label{lem:power_law_sp}
      Suppose a signal with sparsity $k$ follows a power-law decay profile with rate $\alpha$ (Definition \ref{def:power_law}). Its structure function $s(p)$ for $p\le k$ satisfies the following asymptotic growth:
      \begin{equation}
            s(p) \asymp
            \begin{cases}
                  \left(\frac{k}{p}\right)^{1-\alpha}, & 0 \le \alpha < 1, \\[6pt]
                  \frac{1+\log k}{1+\log p},           & \alpha = 1,       \\[6pt]
                  1,                                   & \alpha > 1.
            \end{cases}
      \end{equation}
\end{lemma}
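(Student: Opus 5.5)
The plan is a direct computation, reducing everything to two-sided estimates for partial sums of $i^{-\alpha}$. By \Cref{def:power_law}, $x_{(i)}^2 = i^{-\alpha}/H_\alpha(k)$ for $i\le k$, where $H_\alpha(N) := \sum_{i=1}^{N} i^{-\alpha}$ is the normalizing constant. Substituting into \Cref{def:structure_function}, for every $p\le k$ we get
\begin{equation*}
  s(p) = \frac{H_\alpha(k)}{H_\alpha(p)} .
\end{equation*}
The lemma therefore follows once $H_\alpha(N)$ is pinned down up to constants (depending only on $\alpha$) for all $N\ge 1$. We then take the ratio at $N=k$ and $N=p$.

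For $0\le\alpha<1$, I would use two elementary bounds. Each of the $N$ terms is at least $N^{-\alpha}$, so $H_\alpha(N)\ge N^{1-\alpha}$. Comparing with the integral of the decreasing function $x^{-\alpha}$ gives
\begin{equation*}
  H_\alpha(N)\le 1+\int_1^N x^{-\alpha}\,dx = 1+\frac{N^{1-\alpha}-1}{1-\alpha}\le \frac{N^{1-\alpha}}{1-\alpha}.
\end{equation*}
Hence $H_\alpha(N)\asymp N^{1-\alpha}$, and the ratio gives $s(p)\asymp (k/p)^{1-\alpha}$. For $\alpha=1$, the integral comparison gives $\log(N+1)\le H_1(N)\le 1+\log N$. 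Since $\log(N+1)\ge \tfrac{\log 2}{1}\cdot\tfrac{1+\log N}{1+1/\log 2}$-type bounds hold uniformly in $N\ge1$ (concretely $\log(N+1)\ge c(1+\log N)$ with an absolute $c>0$), we get $H_1(N)\asymp 1+\log N$ and $s(p)\asymp (1+\log k)/(1+\log p)$. For $\alpha>1$, we have $1\le H_\alpha(N)\le 1+\int_1^\infty x^{-\alpha}dx=\alpha/(\alpha-1)$. So $H_\alpha$ is bounded above and below by constants, and $s(p)\asymp 1$.

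The only delicate point is the meaning of the implied constants. They necessarily depend on $\alpha$: they blow up like $1/(1-\alpha)$ as $\alpha\uparrow1$ and like $\alpha/(\alpha-1)$ as $\alpha\downarrow1$, and this blow-up is exactly why the case $\alpha=1$ has to be handled separately with logarithms. I would state explicitly that $\asymp$ is understood for fixed $\alpha$, uniformly in $1\le p\le k$ and in $k$. With that convention there is no further obstacle. One bookkeeping remark: the case $p=1$ is included, since all bounds above hold for every $N\ge1$, and for $p\ge k$ the formula saturates at $s=1$, consistent with \Cref{def:structure_function}.
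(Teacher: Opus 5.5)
Your proof is correct and follows essentially the paper's route: write $s(p)=H_\alpha(k)/H_\alpha(p)$ and bound the generalized harmonic sums by integral comparison. Your counting lower bound $H_\alpha(N)\ge N^{1-\alpha}$ is a slightly cleaner substitute for the paper's $\int_1^{N+1}x^{-\alpha}\,dx$, and it also covers $\alpha=0$, which the paper's case split ($0<\alpha<1$) leaves implicit. The only blemish is the garbled constant in the $\alpha=1$ case; the clean statement is $1+\log N\le (1+1/\log 2)\log(N+1)$, i.e.\ $c=\log 2/(1+\log 2)$.
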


With the foundational properties established, we proceed to the formal proofs of the main theoretical results. The proof of \Cref{thm:recovery_guarantee} is structured inductively: it relies on \Cref{prop:initialization} to establish a statistically reliable base case, and synthesizes the bilateral coupling mechanisms (\Cref{prop:alignment} and \Cref{prop:reselection}) to drive the iterative steps. Subsequently, the asymptotic bounds from \Cref{lem:power_law_sp} are applied to derive the sharp phase transition detailed in \Cref{cor:scca_power_law_transition}.

\subsection{Proof of Theorem \ref{thm:recovery_guarantee}}
Throughout the proof, absolute numerical constants may change from line to line and are absorbed into the global constant $C$ appearing in the sample complexity condition. The proof proceeds by mathematical induction on the iteration step $t$. We aim to show that for all $1 \le t \le k_{\max}$, the estimators capture sufficient energy
\begin{equation}\label{eq:energy_bound}
      \|\bfu_{S_\bfu^{(t)}}\|_2 \ge \sqrt{\frac{\gamma}{s_\bfu(t_\bfu)}} \quad \text{and} \quad \|\bfv_{S_\bfv^{(t)}}\|_2 \ge \sqrt{\frac{\gamma}{s_\bfv(t_\bfv)}},
\end{equation}
where $t_\bfu = \min(t, k_\bfu)$ and $t_\bfv = \min(t, k_\bfv)$ denote the effective support sizes at step $t$.

\textbf{1. Base Case ($t=0 \to 1$):}
By Proposition \ref{prop:initialization}, the initialization selects indices $(i_0, j_0)$ satisfying the joint product bound
\begin{equation}\label{eq:t-0-bound}
      |u_{i_0}v_{j_0}| \ge \sqrt{\frac{\gamma}{s_\bfu(1)s_\bfv(1)}}.
\end{equation}

Consider the first update for $\bfu$. Starting with the initial proxy support $|S_\bfu^{(1)}| = 1$, Proposition \ref{prop:reselection} dictates that the captured energy satisfies
\begin{equation}
      \|\bfu_{S_\bfu^{(1)}}\|_2 \ge \frac{1}{\sqrt{s_\bfu(1)}} - \frac{2C(1+\rho)}{\rho |v_{j_0}|} \sqrt{\frac{2\log n}{m}}.
\end{equation}
To guarantee the target energy bound $\|\bfu_{S_\bfu^{(1)}}\|_2 \ge \sqrt{\gamma/s_\bfu(1)}$, the noise term must be strictly bounded by $(1-\sqrt{\gamma})|u_{(1)}|$, where $|u_{(1)}| = 1/\sqrt{s_\bfu(1)}$ denotes the maximum signal magnitude. Rearranging this condition for $m$ yields
\begin{equation}\label{eq:m-init-intermediate}
      m \ge \frac{8 C^2 (1+\rho)^2}{\rho^2 (1-\sqrt{\gamma})^2} \frac{\log n}{|v_{j_0}|^2 |u_{(1)}|^2}.
\end{equation}

Here lies the crucial decoupling step. By definition of the maximum magnitude, $|u_{(1)}| \ge |u_{i_0}|$ holds deterministically. Coupling this property with the initialization guarantee \eqref{eq:t-0-bound}, we establish a lower bound for the asymmetric cross-view signal product:
\begin{equation}
      |u_{(1)}| |v_{j_0}| \ge |u_{i_0}| |v_{j_0}| \ge \sqrt{\frac{\gamma}{s_\bfu(1)s_\bfv(1)}}.
\end{equation}

Since this product appears in the denominator of \eqref{eq:m-init-intermediate}, substituting its lower bound yields a stricter but sufficient condition for the sample size. The requirement simplifies to
\begin{equation}
      m \ge \frac{8 C^2 (1+\rho)^2}{\rho^2 \gamma (1-\sqrt{\gamma})^2} s_\bfu(1)s_\bfv(1) \log n.
\end{equation}

This condition is strictly subsumed by the global sample complexity assumption \eqref{eq:m-bound-precise}. A symmetric argument applies to the first update of $\bfv$. Thus, the base case \eqref{eq:energy_bound} is rigorously established for $t=1$.

\textbf{2. Inductive Step ($t \to t+1$):}
Assume the energy bounds~\eqref{eq:energy_bound} hold at step $t$. We now show they persist for step $t+1$. We focus on the update for $\bfu$; the argument for $\bfv$ is identical.

\textit{Step 2a: Alignment of the partner estimator.}
First, we establish the alignment of $\hat{\bfv}^{(t)}$. Under the inductive hypothesis, the signal strength on the current supports is lower bounded by $\sigma^*_t = \rho \|\bfu_{S_\bfu^{(t)}}\|_2 \|\bfv_{S_\bfv^{(t)}}\|_2 \ge \rho \gamma / \sqrt{s_\bfu(t_\bfu)s_\bfv(t_\bfv)}$.

To enable the boosting step, we require $|\langle \hat{\bfv}^{(t)}, \bfv \rangle| \ge \sqrt{\gamma} \|\bfv_{S_\bfv^{(t)}}\|_2$. By Proposition \ref{prop:alignment}, it suffices to ensure that the noise term satisfies $2\|\bfW_{S_\bfu^{(t)}, S_\bfv^{(t)}}\|_2 / \sigma^*_t \le \sqrt{1-\gamma}$. Applying the uniform bound~\eqref{eq:union_event} for $\|\bfW_{S_\bfu^{(t)}, S_\bfv^{(t)}}\|_2$ under event $\mcE$ and substituting the lower bound for $\sigma^*_t$, this condition translates to
\begin{equation}
      m \ge \frac{4 C^2 (1+\rho)^2}{\rho^2 \gamma^2 (1-\gamma)} (t_\bfu+t_\bfv) s_\bfu(t_\bfu)s_\bfv(t_\bfv) \log n.
\end{equation}

Since $(1-\sqrt{\gamma})^2 < 1-\gamma$ for $\gamma \in (0,1)$, this requirement is strictly subsumed by the global sample complexity condition \eqref{eq:m-bound-precise}. Consequently, the alignment bound holds, yielding
\begin{equation}\label{eq:alignment_lower_bound}
      |\langle \hat{\bfv}^{(t)}, \bfv \rangle| \ge \sqrt{\gamma} \|\bfv_{S_\bfv^{(t)}}\|_2 \ge \frac{\gamma}{\sqrt{s_\bfv(t_\bfv)}}.
\end{equation}

\textit{Step 2b: Reselection of the next support.}
Now consider the update $\hat{\bfv}^{(t)} \to \hat{\bfu}^{(t+1)}$. For notational simplicity during this step, let $p = |S_\bfu^{(t+1)}|$ and $q = |S_\bfv^{(t)}|$ denote the respective support sizes. Applying Proposition \ref{prop:reselection}, we have
\begin{equation}
      \|\bfu_{S_\bfu^{(t+1)}}\|_2
      \ge \frac{1}{\sqrt{s_\bfu(p)}} - \frac{2C(1+\rho)}{\rho |\langle \hat{\bfv}^{(t)}, \bfv \rangle|} \sqrt{\frac{(p + q) \log n}{m}}.
\end{equation}
To guarantee $\|\bfu_{S_\bfu^{(t+1)}}\|_2 \ge \sqrt{\gamma/s_\bfu(p)}$, it suffices to bound the subtraction term by $(1-\sqrt{\gamma})/\sqrt{s_\bfu(p)}$. Substituting the explicit alignment lower bound \eqref{eq:alignment_lower_bound} for $|\langle \hat{\bfv}^{(t)}, \bfv \rangle|$ and rearranging for $m$ yields the requirement
\begin{equation}
      m \ge \frac{4C^2(1+\rho)^2}{\rho^2 \gamma^2 (1-\sqrt{\gamma})^2} \left[ (p + q) s_\bfu(p) s_\bfv(q) \right] \log n.
\end{equation}
This condition matches the general term in the sample complexity condition \eqref{eq:m-bound-precise} (utilizing the equivalence established in Lemma \ref{lem:asymp-equiv} for index shifts). Since the theorem assumes this holds for the maximum over all $t$, the energy bound for $\bfu$ at $t+1$ is established.

\textbf{3. Final Estimation Error:}
At the final iteration $k_{\max}$, the estimators are formed using the supports $S_\bfu^{(k_{\max})}$ and $S_\bfv^{(k_{\max})}$. By the inductive proof, these supports capture a significant fraction of the signal energy: $\|\bfu_{S_\bfu^{(k_{\max})}}\|_2 \ge \sqrt{\gamma}$ and $\|\bfv_{S_\bfv^{(k_{\max})}}\|_2 \ge \sqrt{\gamma}$.

To bound the total estimation error, we decompose the principal angle using the triangle inequality. Let $\bar{\bfu} = \bfu_{S_\bfu^{(k_{\max})}} / \|\bfu_{S_\bfu^{(k_{\max})}}\|_2$ denote the normalized projection of $\bfu$ onto the final support. Then, we have:
\begin{equation}
      \sin\angle(\hat{\bfu}^{(k_{\max})}, \bfu) \le \sin\angle(\bar{\bfu}, \bfu) + \sin\angle(\hat{\bfu}^{(k_{\max})}, \bar{\bfu}).
\end{equation}

The first term represents the approximation error due to unselected coordinates, which satisfies $\sin\angle(\bar{\bfu}, \bfu) = \sqrt{1 - \|\bfu_{S_\bfu^{(k_{\max})}}\|_2^2} \le \sqrt{1-\gamma}$. The second term is the statistical estimation error within the captured subspace. Applying \Cref{cor:wedin_rank1_weyl} (detailed in Appendix \ref{sec:app-A}) with the restricted signal strength $\sigma_1 = \rho \|\bfu_{S_\bfu^{(k_{\max})}}\|_2 \|\bfv_{S_\bfv^{(k_{\max})}}\|_2 \ge \rho\gamma$, we obtain
\begin{align}
      \nonumber\sin\angle(\hat{\bfu}^{(k_{\max})}, \bar{\bfu}) & \le \frac{2\|\bfW_{S_\bfu^{(k_{\max})},S_\bfv^{(k_{\max})}}\|_2}{\sigma_1}  \\
                                                               & \le \frac{2 C(1+\rho)}{\rho \gamma} \sqrt{\frac{(k_\bfu+k_\bfv)\log n}{m}}.
\end{align}

Combining these two bounds yields the final error for $\bfu$ in \eqref{eq:u-error}. A symmetric argument establishes the bound for $\bfv$ in \eqref{eq:v-error}, completing the proof. \hfill \IEEEQEDopen

\subsection{Proof of Corollary \ref{cor:scca_power_law_transition}}

As established in \eqref{eq:m-bound-asymp}, the asymptotic sample complexity in the balanced regime ($k_\bfu = k_\bfv = k$) is dictated by the maximum of the complexity term $\mathcal{C}(p) := p s_\bfu(p) s_\bfv(p)$ over $1 \le p \le k$.

First, consider the regime where $\alpha_\bfu \neq 1$ and $\alpha_\bfv \neq 1$.
By Lemma \ref{lem:power_law_sp}, the structure functions scale as $s(p) \asymp (k/p)^{1-\bar{\alpha}}$, where the saturated decay rate is defined as $\bar{\alpha} = \min(\alpha, 1)$. Substituting these asymptotics yields
\begin{equation}
      \mathcal{C}(p) \asymp p^{\bar{\alpha}_{\bfu\bfv}-1} k^{2-\bar{\alpha}_{\bfu\bfv}}, \quad \text{where } \bar{\alpha}_{\bfu\bfv} := \bar{\alpha}_\bfu + \bar{\alpha}_\bfv.
\end{equation}
The maximum is determined by the behavior of the power function $p^{\bar{\alpha}_{\bfu\bfv}-1}$:
\begin{itemize}
      \item If $\bar{\alpha}_{\bfu\bfv} < 1$, the function is strictly decreasing, peaking at $p=1$ with order $k^{2-\bar{\alpha}_{\bfu\bfv}}$. Since $\alpha_\bfu, \alpha_\bfv \ge 0$, the condition $\bar{\alpha}_{\bfu\bfv} < 1$ holds if and only if $\alpha_\bfu + \alpha_\bfv < 1$. In this regime, no truncation occurs (i.e., $\bar{\alpha}_{\bfu\bfv} = \alpha_\bfu + \alpha_\bfv$), yielding the complexity order $k^{2-(\alpha_\bfu + \alpha_\bfv)}$.
      \item If $\bar{\alpha}_{\bfu\bfv} \ge 1$, the function is non-decreasing, peaking at $p=k$ with order $k$. Mathematically, this condition is exactly equivalent to $\alpha_\bfu + \alpha_\bfv \ge 1$.
\end{itemize}
Combining these two disjoint cases, the final exponent seamlessly simplifies to the un-truncated form $\tau = \max(1, 2 - (\alpha_\bfu + \alpha_\bfv))$.

Next, consider the boundary case where $\alpha_\bfu = 1$ (the symmetric case $\alpha_\bfv = 1$ follows identical logic).
Here, the structure function scales as $s_\bfu(p) \asymp \frac{1+\log k}{1+\log p}$. The complexity term becomes $\mathcal{C}(p) \asymp p \frac{1+\log k}{1+\log p} s_\bfv(p)$.
\begin{itemize}
      \item If $\alpha_\bfv = 0$, then $s_\bfv(p) \asymp k/p$. Thus, $\mathcal{C}(p) \asymp k \frac{1+\log k}{1+\log p}$, which is strictly decreasing with respect to $p$. The maximum is attained at $p=1$, yielding the order $k \log k$.
      \item If $\alpha_\bfv > 0$, then $s_\bfv(p) \lesssim (k/p)^{1-\bar{\alpha}_\bfv}$ for some $\bar{\alpha}_\bfv \in (0, 1]$. In this case, $\mathcal{C}(p) \lesssim k^{1-\bar{\alpha}_\bfv} \frac{p^{\bar{\alpha}_\bfv}}{1+\log p} (1+\log k)$. Since $\bar{\alpha}_\bfv > 0$, the algebraic growth of $p^{\bar{\alpha}_\bfv}$ asymptotically dominates the logarithmic decay of $(1+\log p)^{-1}$. Consequently, for sufficiently large $k$, the supremum over $p \in [1, k]$ is attained at the endpoint $p=k$, yielding the optimal order $k$.
\end{itemize}
This completes the derivation of the phase transition exponent, explicitly exposing the additional $\log k$ penalty that arises exclusively in the extreme boundary configuration $\{\alpha_\bfu, \alpha_\bfv\} = \{0, 1\}$.
\hfill \IEEEQEDopen

%% file: 05Discussion.tex
\section{Discussion}
\label{sec:discussion}
This section highlights several mechanisms behind the theoretical results of Bi-SEP. We first explain how the product-type initialization can be converted into an individual energy guarantee that initiates the inductive recovery. We then discuss the origin of the sparsity coupling appearing in the theoretical error bound.
\subsection{From Coupled Product to Individual Energy}\label{sec:discussion-product}

A central challenge in analyzing SCCA is bridging initialization and the iterative procedure. The iterative analysis relies on a cyclic dependency: alignment requires sufficient signal energy (\Cref{prop:alignment}), while capturing energy requires a well-aligned partner to suppress noise (\Cref{prop:reselection}). Entering this inductive cycle therefore requires an initial lower bound on $\|\bfu_{S^{(1)}}\|_2$.

However, initialization (\Cref{prop:initialization}) guarantees only a large joint product $|u_{i_0} v_{j_0}|$, rather than a strong individual alignment $|\langle \hat{\bfv}^{(0)}, \bfv \rangle| = |v_{j_0}|$. If the initial probe $v_{j_0}$ is small, standard bounds become insufficient because the noise term may dominate. The proxy generation step at $t=0$ resolves this issue through two mechanisms:

\begin{enumerate}

    \item \textbf{Scalar Multiplication Preserves Ordering:}
          Since $\hat{\bfv}^{(0)}$ is $1$-sparse (supported on $j_0$), the signal component of the proxy vector $\bfr_\bfu = \bhSigma_{xy}\hat{\bfv}^{(0)}$ is $\rho v_{j_0}\bfu$. The scalar factor $v_{j_0}$ therefore does not affect the ordering of coordinates. Selecting the maximal proxy entry consequently targets the largest coordinate of the true signal, $u_{(1)}$, capturing an effective signal strength $\rho |u_{(1)} v_{j_0}|$.

    \item \textbf{Joint Product Guarantees SNR:}
          The initialization step selects $(i_0,j_0)$ as the maximal empirical entry, ensuring that the signal term $\rho |u_{i_0} v_{j_0}|$ dominates the noise (\Cref{prop:initialization}). Since $|u_{(1)}| \ge |u_{i_0}|$, it follows that
          \begin{equation}
              \rho |u_{(1)} v_{j_0}| \ge \rho |u_{i_0} v_{j_0}|,
          \end{equation}
          which guarantees that the true peak remains detectable in the proxy and establishes a valid lower bound for $\|\bfu_{S^{(1)}}\|_2$.

\end{enumerate}

Therefore, even if the initial alignment $|v_{j_0}|$ is weak, the proxy selection remains reliable because the signal-to-noise ratio is determined by the joint product rather than the isolated probe. This step converts the joint initialization guarantee into an individual energy bound, enabling the subsequent inductive analysis. This mechanism contrasts with symmetric settings such as SPCA, where initialization directly isolates high-energy coordinates.

\subsection{On the Coupling in the Error Bound}\label{sec:discussion-coupling}
In \Cref{thm:recovery_guarantee}, the statistical error rate for estimating $\bfu$ scales as $\sqrt{(\ku+\kv)\log n/m}$. This coupled dependence arises intrinsically from our joint recovery strategy via block SVD. To establish a uniform high-probability bound, our analysis controls the spectral norm of the noise matrix $\bfW$ over all sparse rectangles: $\sup_{|S_1|\le \ku, |S_2|\le \kv} \|\bfW_{S_1,S_2}\|_2$. The necessary $\epsilon$-net argument over these rectangles introduces the combinatorial complexity of both supports, yielding the $(\ku+\kv)$ dependence.

This contrasts with prior works that achieve decoupled rates of $\sqrt{\ku\log n/m}$ by employing a one-sided prediction loss, effectively treating $\bfv$ as a nuisance parameter. As noted in~\cite{gao2017sparse}, applying Wedin's $\sin\Theta$ theorem to cross-covariance perturbations leads to spectral-norm control of the noise block, which consequently incurs the $\kv$ dependence and becomes suboptimal for isolated estimation of $\bfu$. This coupling arises because the intermediate estimator $\hat{\bfv}$ is data-dependent and correlated with the empirical noise, preventing direct control of the projected noise term $\bfW\hat{\bfv}$.

While rigorous uniform control necessitates this theoretical coupling, Bi-SEP practically exhibits a decoupled error behavior (as demonstrated empirically in \Cref{sec:exp-decoupled-error}). This discrepancy arises because the theoretical analysis relies on the worst-case block operator norm. In practice, the update for $\bfu$ is driven by the vector-noise term $\bfW\hat{\bfv}$. For whitened Gaussian designs, the magnitude of this projected noise depends primarily on the Euclidean norm $\|\hat{\bfv}\|_2$ and the output dimension $\ku$, rather than the input sparsity $\kv$. Consequently, as long as $\hat{\bfv}$ is reasonably aligned, the effective noise acts as a standard random vector, preventing the larger sparsity $\kv$ from degrading the estimation of $\bfu$.

%% file: 06Experiments.tex
\section{Experiments}
\label{sec:experiments}
\begin{figure*}[htbp]
  \centering
  \subfloat[Flat signals\label{fig:exp1-flat}]{\includegraphics[width=0.32\linewidth]{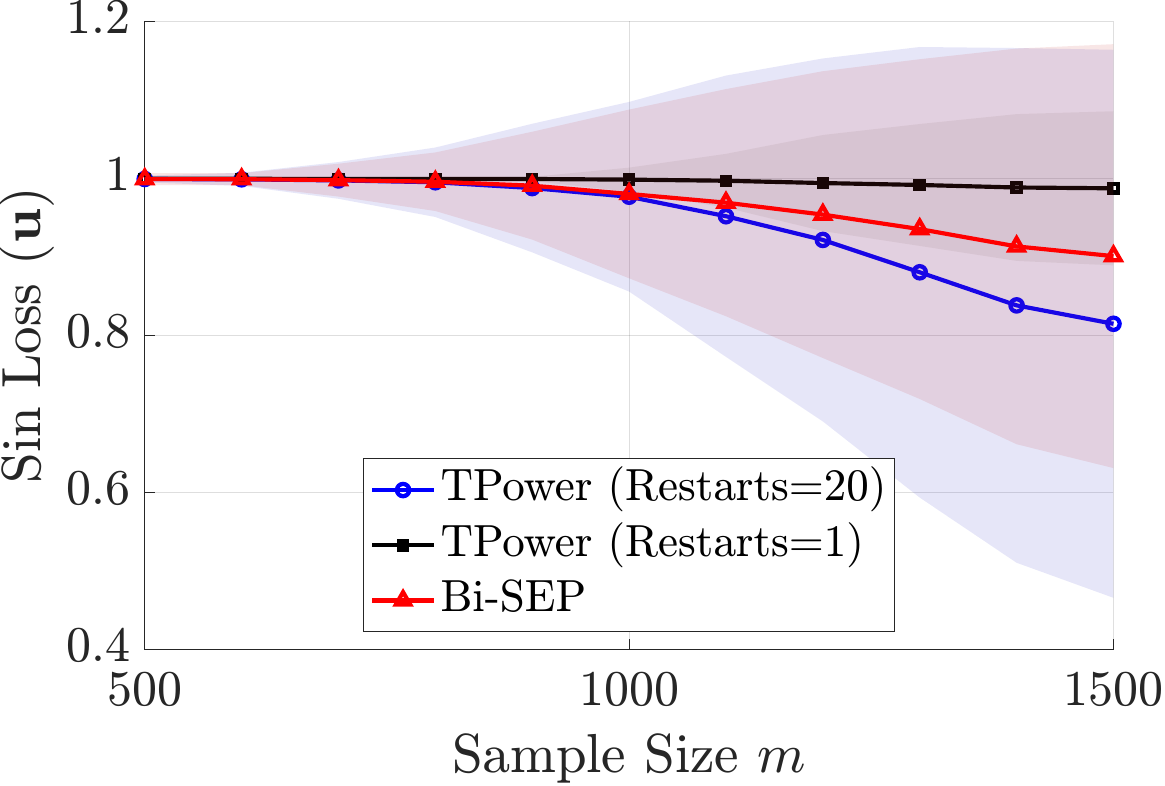}}
  \hfill
  \subfloat[Power-law signals ($\alpha=1$)\label{fig:exp1-power-law}]{\includegraphics[width=0.32\linewidth]{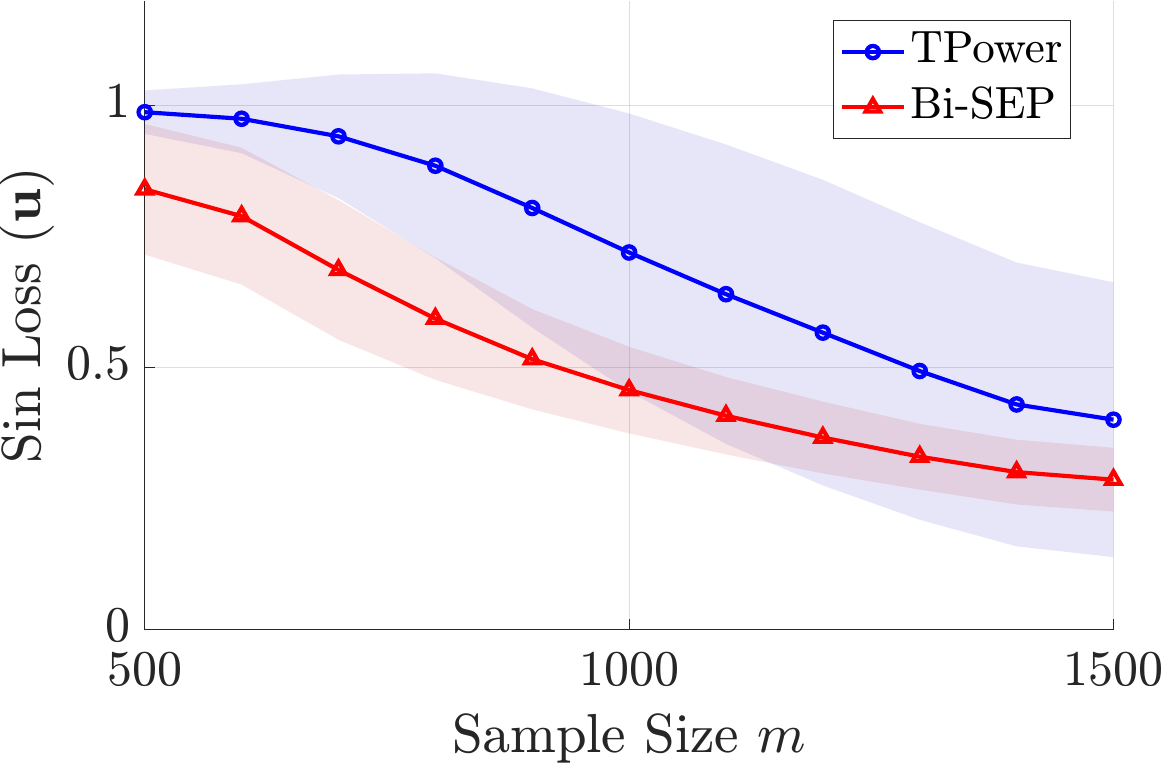}}
  \hfill
  \subfloat[Exponential decaying signals\label{fig:exp1-exp}]{\includegraphics[width=0.32\linewidth]{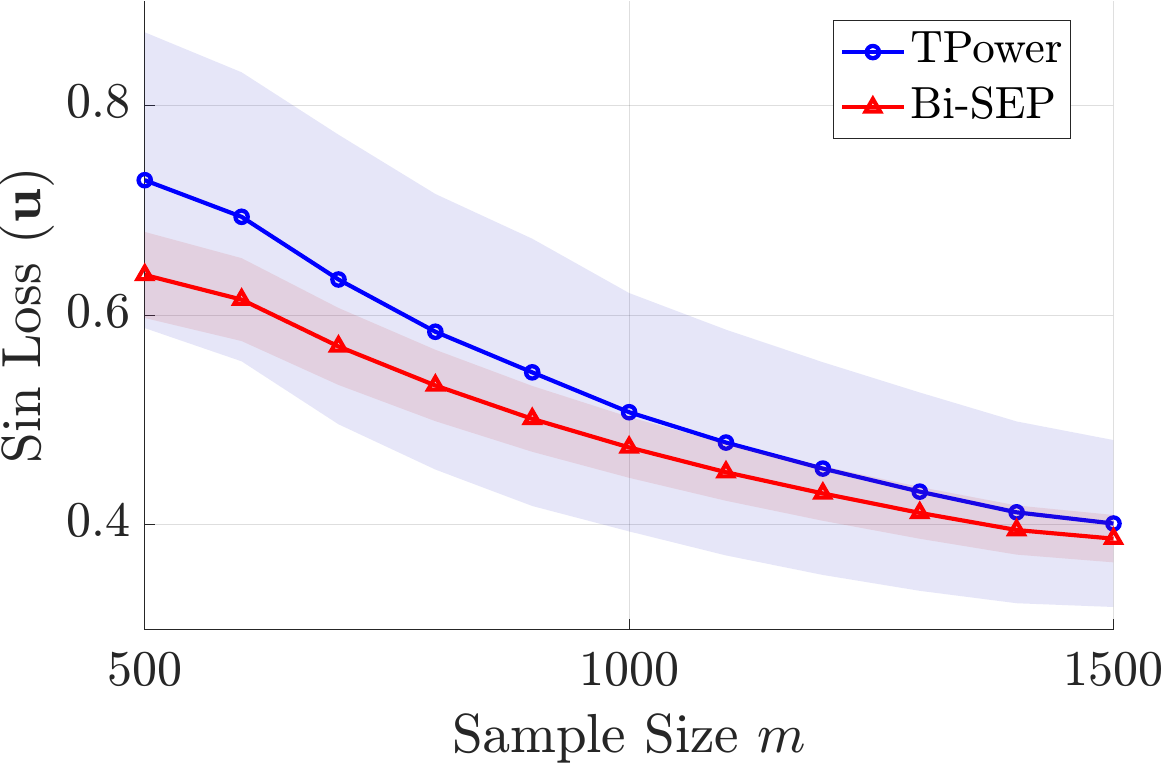}}
  \caption{Estimation error of $\bfu$ versus sample size $m$ under different signal profiles. The shaded region denotes $\pm 1$ standard deviation. Bi-SEP exhibits a profile-adaptive sample complexity, requiring fewer samples as the signal energy becomes more concentrated.}
  \label{fig:exp1-m}
\end{figure*}
In this section, we conduct synthetic experiments to validate our theoretical claims regarding the profile-adaptive sample complexity, the synergistic gain between views, and the decoupled error behavior of the Bi-SEP algorithm. We compare Bi-SEP with the TPower method (Truncated Power method for SCCA). To establish a strong baseline, TPower is allowed 20 random initializations per trial, and we report the best result among them. Both methods are evaluated based on the estimation error $\sin\angle(\hat{\bfu}, \bfu) = \sqrt{1 - (\hat{\bfu}^\top\bfu)^2}$ and $\sin\angle(\hat{\bfv}, \bfv) = \sqrt{1 - (\hat{\bfv}^\top\bfv)^2}$. In all figures within this section, the solid lines represent the mean error across Monte Carlo trials, and the shaded regions denote $\pm 1$ standard deviation.

\subsection{Profile-Adaptive Sample Complexity}
\label{sec:exp-sample-complexity}

To illustrate the profile-adaptive sample complexity (the term $s_\mathbf{u}(t_\mathbf{u})s_\mathbf{v}(t_\mathbf{v})$) established in \Cref{thm:recovery_guarantee}, we test the recovery performance under varying sample sizes $m$. We fix the dimension $n=1000$, signal strength $\rho=0.8$, and symmetric sparsity levels $k_\bfu = k_\bfv = 20$. We consider three representative signal profiles for both $\bfu$ and $\bfv$: flat signals, power-law decaying signals ($\alpha=1$), and exponentially decaying signals. For the exponential case, the sorted signal magnitudes satisfy $|x_{(i)}|^2 \propto \exp(-i)$ before normalization. For brevity, we report only the estimation error of $\bfu$ in the plots, since the behavior of $\bfv$ is symmetric and exhibits nearly identical trends. As shown in \Cref{fig:exp1-m}, the algorithms exhibit distinct behaviors depending on the profile. For flat signals (\Cref{fig:exp1-flat}), we additionally plot a single-initialization version of TPower (black curve) to isolate the effect of random restarts. While TPower with 20 random restarts performs slightly better than Bi-SEP, restricting TPower to a single initialization leads to a substantial performance degradation, whereas Bi-SEP remains unchanged due to its deterministic initialization. This indicates that the advantage of TPower in the flat regime largely stems from its use of multiple random restarts rather than an inherent algorithmic benefit.

However, as the signal energy becomes more concentrated, particularly for the power-law and exponential decay profiles shown in \Cref{fig:exp1-power-law} and \Cref{fig:exp1-exp}, Bi-SEP's performance improves significantly and outperforms TPower. This demonstrates that Bi-SEP effectively overcomes the worst-case flat-signal bottleneck, with a sample complexity that adapts to the underlying signal structure. We also observe that TPower exhibits a certain degree of profile adaptivity in practice, in the sense that its required sample size decreases as the signal becomes more concentrated. However, this behavior is empirical and is not explained by existing theoretical analyses of TPower.

\subsection{Synergistic Gain and Phase Transition}
\label{sec:exp-synergistic-gain}

A key property of the bilateral pursuit mechanism is the synergistic interaction between the two views. \Cref{cor:scca_power_law_transition} predicts a phase transition governed by the aggregate signal structure. To visualize this in a 2D parameter space, we construct a heat map under symmetric sparsity $k_\bfu = k_\bfv = 20$ with sample size $n = m = 1000$ and $\rho = 0.8$. We independently vary the power-law decay rates $\alpha_\bfu$ and $\alpha_\bfv$ from $0$ to $2$. For each grid point, we plot the maximum estimation error $\max\!\big(\sin\angle(\hat{\bfu},\bfu),\sin\angle(\hat{\bfv},\bfv)\big)$ achieved by Bi-SEP.

As shown in \Cref{fig:exp-phase-transition}, the theoretical boundary
$\alpha_\bfu + \alpha_\bfv = 1$ (red dashed line) closely aligns with the empirical transition observed in the heat map. We additionally mark the approximate empirical transition region (white dashed line), which appears slightly above the theoretical boundary due to finite-sample effects and hidden constants in the asymptotic bound.
\begin{figure}[t]
  \centering
  \includegraphics[width=0.8\linewidth]{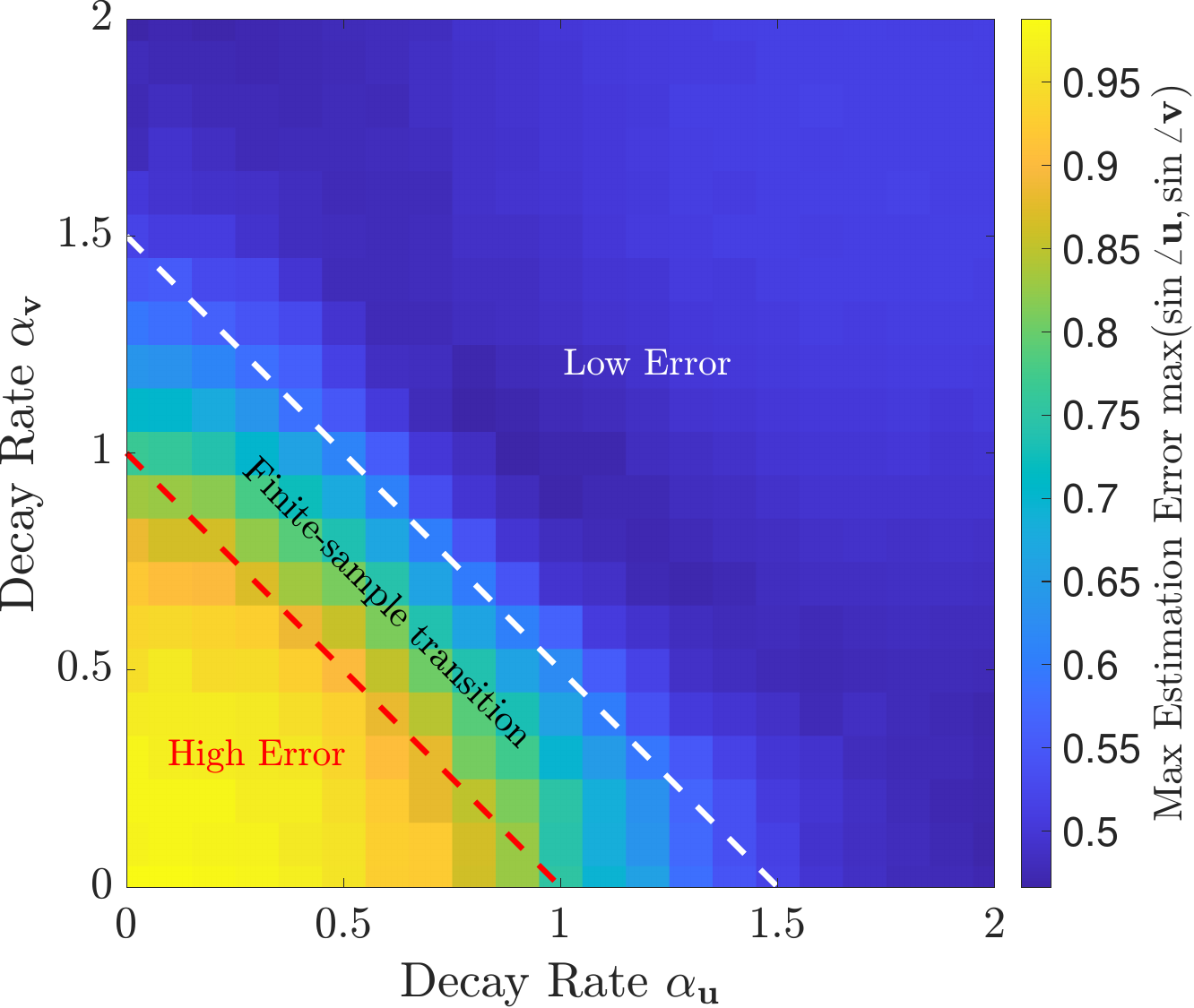}
  \caption{Phase transition of the estimation error under varying decay rates $\alpha_\bfu$ and $\alpha_\bfv$. The transition across the red boundary line ($\alpha_\bfu + \alpha_\bfv = 1$) corroborates the theoretical condition for synergistic compensation and linear sample complexity.}
  \label{fig:exp-phase-transition}
\end{figure}

Importantly, the heat map exhibits a clear compensatory trend, supporting our theoretical claim that the estimation performance is primarily governed by the aggregate signal structure. Specifically, an extremely flat signal in one view paired with a highly concentrated signal in the other (e.g., $\alpha_\bfu = 0, \alpha_\bfv = 1.5$) yields roughly the same recovery error as two moderately concentrated signals (e.g., $\alpha_\bfu = 0.75, \alpha_\bfv = 0.75$). This demonstrates the synergistic compensation mechanism, where a concentrated signal in one view can effectively rescue a structurally deficient partner.

Furthermore, while the asymptotic theory guarantees the linear regime starting exactly at the boundary $\alpha_\bfu + \alpha_\bfv = 1$, the empirical error continues to decrease as the combined concentration increases, and stabilizes only when $\alpha_\bfu + \alpha_\bfv \approx 1.5$ (the darkest blue region). This region may be interpreted as a finite-sample improvement zone, where stronger aggregate concentration provides additional robustness against sampling noise. A precise characterization of this gap would require significantly larger sample sizes and tighter constant tracking in the theoretical analysis.

\subsection{Decoupled Error Behavior}

\label{sec:exp-decoupled-error}

While our theoretical bound in \Cref{thm:recovery_guarantee} suggests a coupled dependence on the sum of sparsities $k_\bfu + k_\bfv$, our empirical results demonstrate a clear decoupled error behavior in practice. Here, we fix a flat signal $\bfu$ ($k_\bfu=10$) and a highly concentrated power-law signal $\bfv$ ($\alpha_\bfv=3.0$). We vary $k_\bfv$ from 10 to 50 under $n=m=1000$ and $\rho=0.8$. As illustrated in \Cref{fig:exp3-kv}, estimating the increasingly long tail of $\bfv$ naturally becomes harder as its intrinsic dimensionality grows, causing the estimation error for $\bfv$ to rise under both algorithms. However, a distinct contrast emerges in the estimation of $\bfu$: the error for $\bfu$ under Bi-SEP remains nearly invariant to the large increases in $k_\bfv$, while the error under TPower steadily degrades. In particular, the slope of the Bi-SEP curve with respect to $k_\bfv$ is nearly flat, whereas the TPower curve increases steadily, highlighting the contrasting error behaviors.

\begin{figure}[t]
  \centering
  \includegraphics[width=1.0\linewidth]{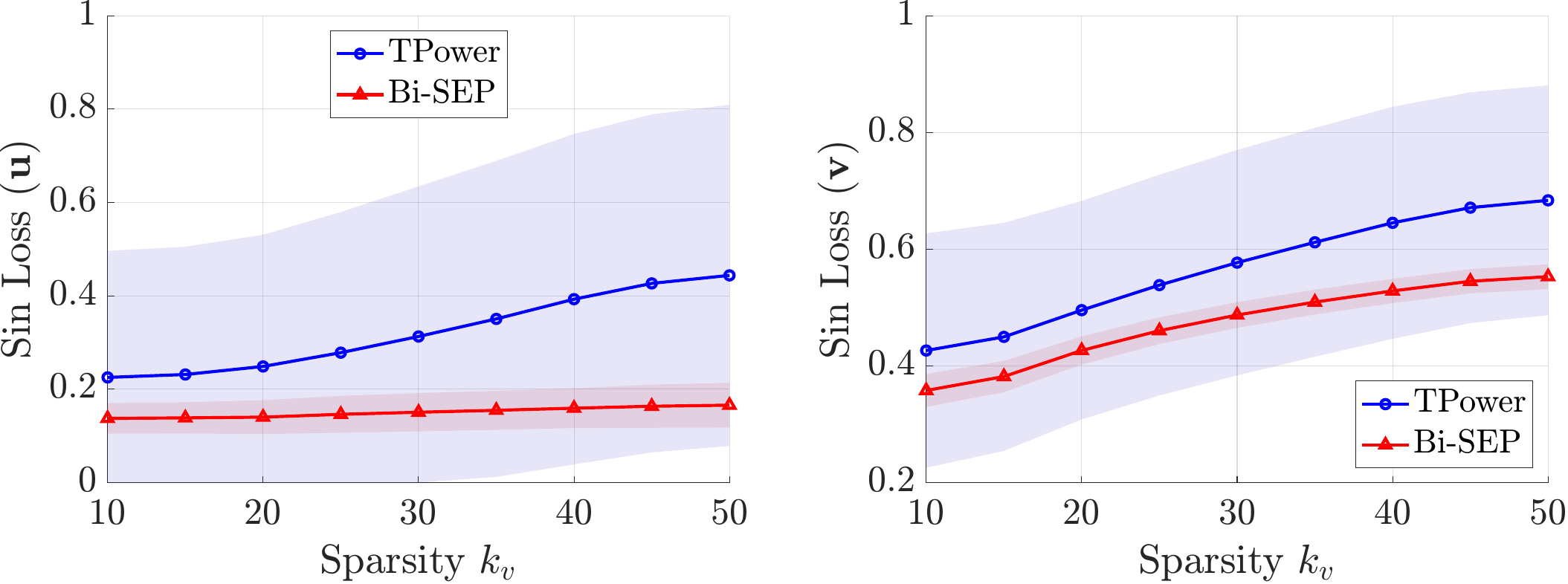}
  \caption{Estimation error versus $k_\bfv$ while keeping $k_\bfu=10$ fixed. While the estimation error for $\bfv$ naturally rises as $k_\bfv$ grows, Bi-SEP maintains a stable error rate for $\bfu$. This confirms the decoupled practical error behavior detailed in \Cref{sec:discussion-coupling}, contrasting with TPower whose $\bfu$ estimate degrades.}
  \label{fig:exp3-kv}
\end{figure}

As discussed in \Cref{sec:discussion-coupling}, this empirical decoupling in Bi-SEP occurs because the active update for $\bfu$ is driven by the vector-noise term $\bfW\hat{\bfv}$. As long as $\hat{\bfv}$ establishes a strong initial alignment, this projected noise acts as standard random vector noise, effectively shielded from the ambient dimension $k_\bfv$. Bi-SEP guarantees this alignment by stagewise expanding its support, securing the high-energy coordinates of $\bfv$ first. In contrast, block-truncation methods like TPower enforce the full target cardinality from the first iteration. For a concentrated signal, forcing a large budget (e.g., $k_\bfv=50$) immediately includes uninformative cross-covariance noise. This dilutes the alignment of $\hat{\bfv}$ from the start, which subsequently degrades the estimation of the partner vector $\bfu$.

%% file: 07Conclusion.tex
\section{Conclusion}
\label{sec:conclusion}

In this paper, we investigated Sparse Canonical Correlation Analysis (SCCA) beyond the worst-case flat-signal assumption by explicitly modeling the non-uniform energy distribution of multi-view signals. We introduced Bilateral Spectral Energy Pursuit (Bi-SEP) to adaptively construct the support sets and track cross-view signal energy. By establishing a profile-adaptive sample complexity, we demonstrated that the well-known computational-statistical gap is intrinsically tied to adversarial flat signals and can be naturally bypassed. Specifically, signals with highly concentrated energy require substantially fewer samples for reliable recovery.

This profile-adaptive framework provides clear directions for future research. While our analysis establishes achievable sample complexity upper bounds for continuous decay profiles, defining the corresponding information-theoretic minimax lower bounds remains an open theoretical challenge. Furthermore, the algorithmic principle of stagewise energy pursuit can be investigated for other multivariate statistical models where similar worst-case initialization bottlenecks limit practical performance.

%% file: 08Appendices.tex
\appendices

\section{Basic Lemmas}\label{sec:app-A}

This appendix provides essential auxiliary results regarding the perturbation of singular subspaces. Specifically, we present a tailored, rank-$1$ specialization of Wedin's $\sin\Theta$ theorem, which serves as the core mathematical tool for analyzing the iterative alignment in our framework.

\begin{lemma}[Wedin's $\sin\Theta$ theorem for rank-1 matrices {\cite{wedin1972perturbation, stewart1990matrix}}]
    \label{lemma:wedin_rank1}
    Let $\bfA \in \mathbb{R}^{n_1 \times n_2}$ be a rank-$1$ matrix with singular value decomposition $\bfA = \sigma_1 \bfu\bfv^\top$, where $\sigma_1 > 0$ and $\|\bfu\|_2 = \|\bfv\|_2 = 1$. Consider the perturbed matrix $\tilde{\bfA} = \bfA + \bfE$. Let $(\tilde\sigma_1, \tilde\bfu, \tilde\bfv)$ denote the leading singular triplet of $\tilde{\bfA}$, and let $\tilde\sigma_2$ be its second largest singular value. Define the spectral gap:
    \begin{equation}
        \delta := \min\big\{\tilde\sigma_1,\, \sigma_1 - \tilde\sigma_2\big\}.
    \end{equation}
    If $\delta > 0$, then the principal angles between the unperturbed and perturbed singular vectors satisfy
    \begin{equation}
        \max \big\{ \sin\angle(\tilde\bfu, \bfu),\, \sin\angle(\tilde\bfv, \bfv) \big\} \le \frac{\|\bfE\|_2}{\delta}.
    \end{equation}
\end{lemma}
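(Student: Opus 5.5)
The plan is to avoid the general Wedin machinery and exploit the rank-$1$ structure of $\bfA$ directly, through the two singular-vector equations of $\tilde\bfA$. Let $(\tilde\sigma_1,\tilde\bfu,\tilde\bfv)$ be the leading singular triplet, so that
\begin{equation*}
  \tilde\bfA\tilde\bfv=\tilde\sigma_1\tilde\bfu, \qquad \tilde\bfA^\top\tilde\bfu=\tilde\sigma_1\tilde\bfv .
\end{equation*}
Introduce the orthogonal projectors $\mathbf{P}_\bfu^\perp:=\bfI-\bfu\bfu^\top$ and $\mathbf{P}_\bfv^\perp:=\bfI-\bfv\bfv^\top$. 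For unit vectors, $\sin\angle(\tilde\bfu,\bfu)=\|\mathbf{P}_\bfu^\perp\tilde\bfu\|_2$, and the analogous identity holds for $\bfv$. These are the quantities to be bounded.

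\textbf{Step 1 (annihilate the signal).} Apply $\mathbf{P}_\bfu^\perp$ to the first equation. Since $\mathbf{P}_\bfu^\perp\bfA=\sigma_1(\mathbf{P}_\bfu^\perp\bfu)\bfv^\top=\mathbf 0$, the signal part vanishes and we are left with
\begin{equation*}
  \tilde\sigma_1\,\mathbf{P}_\bfu^\perp\tilde\bfu=\mathbf{P}_\bfu^\perp\bfE\tilde\bfv .
\end{equation*}
Taking norms and using $\|\mathbf{P}_\bfu^\perp\|_2\le 1$ and $\|\tilde\bfv\|_2=1$ gives
\begin{equation*}
  \tilde\sigma_1\sin\angle(\tilde\bfu,\bfu)\le\|\bfE\|_2 .
\end{equation*}

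\textbf{Step 2 (right singular vector).} Apply $\mathbf{P}_\bfv^\perp$ to the transposed equation. Since $\mathbf{P}_\bfv^\perp\bfA^\top=\mathbf 0$, the same argument gives
\begin{equation*}
  \tilde\sigma_1\sin\angle(\tilde\bfv,\bfv)\le\|\bfE^\top\|_2=\|\bfE\|_2 .
\end{equation*}

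\textbf{Step 3 (conclude).} Because $\delta=\min\{\tilde\sigma_1,\sigma_1-\tilde\sigma_2\}\le\tilde\sigma_1$ and $\delta>0$ by assumption, dividing both inequalities by $\tilde\sigma_1$ yields
\begin{equation*}
  \max\big\{\sin\angle(\tilde\bfu,\bfu),\,\sin\angle(\tilde\bfv,\bfv)\big\}\le\|\bfE\|_2/\tilde\sigma_1\le\|\bfE\|_2/\delta .
\end{equation*}

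I do not expect a genuine obstacle. The only subtle point is recognizing that, because the unperturbed matrix has rank one, its trailing singular subspace contributes nothing: $\mathbf{P}_\bfu^\perp\bfA$ vanishes identically. In the general Wedin argument this is exactly where the gap $\sigma_1-\tilde\sigma_2$ enters, to control a residual term; here that term is zero, so the bound is actually stronger than stated and the $\sigma_1-\tilde\sigma_2$ term in $\delta$ is never needed.

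If I wanted to follow the general template instead, as a cross-check, I would write the residuals $\tilde\bfA\bfv-\sigma_1\bfu=\bfE\bfv$ and $\tilde\bfA^\top\bfu-\sigma_1\bfv=\bfE^\top\bfu$, project onto the trailing singular subspaces of $\tilde\bfA$, and use $\tilde\sigma_2$ to obtain the $\sigma_1-\tilde\sigma_2$ denominator. That route recovers the same statement, but the direct projection argument above is shorter, and it is the one I would present.
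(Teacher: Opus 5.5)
Your proof is correct, and it takes a different route from the paper, which gives no argument for this lemma and simply cites the general Wedin $\sin\Theta$ theorem \cite{wedin1972perturbation, stewart1990matrix}.

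\textbf{What your argument buys.} It is self-contained and in fact proves the sharper bound $\max\{\sin\angle(\tilde\bfu,\bfu),\sin\angle(\tilde\bfv,\bfv)\}\le\|\bfE\|_2/\tilde\sigma_1$. This implies the stated bound because $\delta\le\tilde\sigma_1$. Your argument is exactly the rank-one case of Wedin's theorem in the Stewart--Sun orientation. There the residuals are $\bfA\tilde\bfv-\tilde\sigma_1\tilde\bfu=-\bfE\tilde\bfv$ and $\bfA^\top\tilde\bfu-\tilde\sigma_1\tilde\bfv=-\bfE^\top\tilde\bfu$, and the gap is $\tilde\sigma_1-\sigma_2(\bfA)=\tilde\sigma_1$. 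The usual Sylvester-type step disappears because $\bfA$ annihilates the orthogonal complement of $\bfu$.

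\textbf{A small correction to your remark.} You say the vanishing term is where $\sigma_1-\tilde\sigma_2$ would enter. In your orientation, the gap that would normally control that term is $\tilde\sigma_1-\sigma_2(\bfA)$, not $\sigma_1-\tilde\sigma_2$. The quantity $\sigma_1-\tilde\sigma_2$ belongs to the swapped orientation you sketch as a cross-check, with residuals $\bfE\bfv$ and $\bfE^\top\bfu$. That route also works on its own and gives $\|\bfE\|_2/(\sigma_1-\tilde\sigma_2)$.

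\textbf{What the citation buys.} The general theorem covers higher-rank subspaces and Frobenius-norm versions. The swapped orientation also gives residuals that involve only the fixed true directions rather than the data-dependent $\tilde\bfu,\tilde\bfv$. However, the paper only ever uses $\|\bfE\|_2$, so none of that generality is needed. Your version also simplifies \Cref{cor:wedin_rank1_weyl}: only the single Weyl inequality $\tilde\sigma_1\ge\sigma_1-\|\bfE\|_2$ is required, and the bound $\tilde\sigma_2\le\|\bfE\|_2$ becomes unnecessary.
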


\begin{corollary}[Rank-1 specialization via Weyl's inequality]
    \label{cor:wedin_rank1_weyl}
    Under the setting of \Cref{lemma:wedin_rank1}, if the spectral norm of the perturbation satisfies $\|\bfE\|_2 < \sigma_1$, then the singular subspace perturbation is bounded by:
    \begin{equation}
        \max\big\{ \sin \angle(\tilde{\bfu}, \bfu),\, \sin \angle(\tilde{\bfv}, \bfv) \big\} \le \frac{\|\bfE\|_2}{\sigma_1 - \|\bfE\|_2}.
    \end{equation}
    Furthermore, in the high signal-to-noise regime where $\|\bfE\|_2 \le \sigma_1/2$ (a condition strictly guaranteed under our sample complexity assumptions), the bound simplifies to
    \begin{equation}
        \max\big\{ \sin \angle(\tilde{\bfu}, \bfu),\, \sin \angle(\tilde{\bfv}, \bfv) \big\} \le \frac{2\|\bfE\|_2}{\sigma_1}.
    \end{equation}
\end{corollary}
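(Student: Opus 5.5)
The statement to prove is \Cref{cor:wedin_rank1_weyl}, the Weyl-based specialization of \Cref{lemma:wedin_rank1}. The plan is to bound the gap $\delta = \min\{\tilde\sigma_1, \sigma_1 - \tilde\sigma_2\}$ from below by $\sigma_1 - \|\bfE\|_2$, and then substitute this bound into \Cref{lemma:wedin_rank1}.

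\textbf{Step 1: Weyl bounds on the perturbed spectrum.} Apply Weyl's inequality for singular values, $|\sigma_i(\bfA+\bfE) - \sigma_i(\bfA)| \le \|\bfE\|_2$ for every $i$. Since $\bfA$ has rank one, its singular values are $\sigma_1(\bfA) = \sigma_1$ and $\sigma_2(\bfA) = 0$. This gives two estimates:
\begin{equation*}
\tilde\sigma_1 \ge \sigma_1 - \|\bfE\|_2, \qquad \tilde\sigma_2 \le \|\bfE\|_2 .
\end{equation*}
The second one yields $\sigma_1 - \tilde\sigma_2 \ge \sigma_1 - \|\bfE\|_2$.

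\textbf{Step 2: the gap.} Combining both estimates gives $\delta \ge \sigma_1 - \|\bfE\|_2$. Under the hypothesis $\|\bfE\|_2 < \sigma_1$, this lower bound is strictly positive, so $\delta > 0$ and \Cref{lemma:wedin_rank1} applies. It yields
\begin{equation*}
\max\{\sin\angle(\tilde\bfu,\bfu),\sin\angle(\tilde\bfv,\bfv)\} \le \frac{\|\bfE\|_2}{\delta} \le \frac{\|\bfE\|_2}{\sigma_1-\|\bfE\|_2}.
\end{equation*}
The last inequality holds because $x \mapsto \|\bfE\|_2/x$ is decreasing on $x > 0$.

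\textbf{Step 3: simplified form.} If $\|\bfE\|_2 \le \sigma_1/2$, then $\sigma_1 - \|\bfE\|_2 \ge \sigma_1/2$. Substituting this into the bound of Step 2 gives the stated $2\|\bfE\|_2/\sigma_1$.

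\textbf{Remarks on difficulty.} There is no real obstacle in this argument. The one point needing care is that the gap $\delta$ in \Cref{lemma:wedin_rank1} involves both $\tilde\sigma_1$ and $\sigma_1 - \tilde\sigma_2$. Each must be bounded separately via Weyl, and both happen to yield the same quantity $\sigma_1 - \|\bfE\|_2$.

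A second point to check is the Weyl estimate on $\tilde\sigma_2$ when $\tilde\bfA$ has fewer than two nonzero singular values, including the degenerate case $\min(n_1,n_2) = 1$. In that situation we take $\tilde\sigma_2 = 0$, and the bound $\tilde\sigma_2 \le \|\bfE\|_2$ holds trivially.
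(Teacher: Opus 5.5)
Your proposal is correct and follows the paper's proof essentially line for line. Weyl's inequality for the rank-one $\bfA$ gives $\tilde\sigma_1 \ge \sigma_1 - \|\bfE\|_2$ and $\tilde\sigma_2 \le \|\bfE\|_2$, hence $\delta \ge \sigma_1 - \|\bfE\|_2 > 0$; this is substituted into \Cref{lemma:wedin_rank1}, and the simplified bound follows from $\sigma_1 - \|\bfE\|_2 \ge \sigma_1/2$ (your remarks on monotonicity and the degenerate case $\tilde\sigma_2 = 0$ are details the paper leaves implicit).
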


\begin{proof}[Proof of \Cref{cor:wedin_rank1_weyl}]
    Since $\sigma_2(\bfA) = 0$, Weyl's inequality gives $\tilde{\sigma}_1 \ge \sigma_1 - \|\bfE\|_2$ and $\tilde{\sigma}_2 \le \|\bfE\|_2$. Thus, the spectral gap satisfies $\delta \ge \sigma_1 - \|\bfE\|_2 > 0$. Substituting this into \Cref{lemma:wedin_rank1} yields the first inequality. The simplified bound directly follows by noting that $\sigma_1 - \|\bfE\|_2 \ge \sigma_1/2$ under the given assumption.
\end{proof}

\section{Proofs of Propositions}
This appendix provides the detailed proofs for the auxiliary propositions introduced in the main text to support the algorithmic induction of Bi-SEP.
\subsection{Proof of \Cref{prop:union_event}}

The proof employs a standard $\epsilon$-net argument combined with Bernstein's inequality and a union bound over all possible sparse supports.
Fix any supports $S_1, S_2 \subset [n]$ with $|S_1|=s_1$ and $|S_2|=s_2$. We can write the empirical cross-covariance block as
\begin{equation}
    \bfW_{S_1,S_2} = \frac{1}{m} \sum_{i=1}^m \Big( \bfx_{i,S_1}\bfy_{i,S_2}^\top - \bbE[\bfx_{S_1}\bfy_{S_2}^\top] \Big) =: \frac{1}{m} \sum_{i=1}^m \bfZ_i,
\end{equation}
where $\bfZ_i \in \bbR^{s_1\times s_2}$ are i.i.d. zero-mean random matrices.

\paragraph{Step 1: Concentration for fixed unit vectors}
Let $\bfa \in \mathbb{S}^{s_1-1}$ and $\bfb \in \mathbb{S}^{s_2-1}$. Define the centered scalar random variables:
\begin{align}
    \nonumber \xi_i(\bfa,\bfb) & := \bfa^\top \bfZ_i \bfb                                                                                                         \\
                               & = \big(\bfa^\top \bfx_{i,S_1}\big)\big(\bfb^\top \bfy_{i,S_2}\big) - \bbE\big[(\bfa^\top \bfx_{S_1})(\bfb^\top \bfy_{S_2})\big].
\end{align}
Since $\bfx$ and $\bfy$ are jointly Gaussian, the projections $\bfa^\top \bfx_{S_1}$ and $\bfb^\top \bfy_{S_2}$ are sub-Gaussian random variables with uniformly bounded $\psi_2$-norms. Consequently, their product is sub-exponential. Using the standard property $\|XY\|_{\psi_1} \lesssim \|X\|_{\psi_2}\|Y\|_{\psi_2}$, there exists an absolute constant $K$ (depending on $\rho$) such that $\|\xi_i(\bfa,\bfb)\|_{\psi_1} \le K$ uniformly over all valid configurations.

Applying Bernstein's inequality for sub-exponential variables, there exist universal constants $c_1, C_1 > 0$ such that for any $u \ge 0$, it holds that
\begin{equation}
    \bbP\!\left( \left|\frac{1}{m}\sum_{i=1}^m \xi_i(\bfa,\bfb)\right| \ge C_1\left(\sqrt{\frac{u}{m}} + \frac{u}{m}\right) \right) \le 2e^{-c_1 u}.
    \label{eq:prop_union_event_bernstein}
\end{equation}

\paragraph{Step 2: $\epsilon$-net argument for fixed supports}
Let $\mcN_{s_1}$ and $\mcN_{s_2}$ denote $1/4$-nets of the unit spheres $\mathbb{S}^{s_1-1}$ and $\mathbb{S}^{s_2-1}$, respectively. By standard volumetric bounds, we can choose these nets such that $|\mcN_{s_1}| \le 9^{s_1}$ and $|\mcN_{s_2}| \le 9^{s_2}$. A standard successive approximation argument yields
\begin{align}
    \nonumber \|\bfW_{S_1,S_2}\|_2 & = \sup_{\bfa, \bfb} \bfa^\top \bfW_{S_1,S_2} \bfb                                                              \\
                                   & \le 2\sup_{\bfa\in\mcN_{s_1}, \bfb\in\mcN_{s_2}} \bfa^\top \bfW_{S_1,S_2} \bfb.\label{eq:prop_union_event_sup}
\end{align}
Taking a union bound of \eqref{eq:prop_union_event_bernstein} over the nets $\mcN_{s_1} \times \mcN_{s_2}$ gives
\begin{align}
    \nonumber & \bbP\!\left( \|\bfW_{S_1,S_2}\|_2 \ge C_2\left(\sqrt{\frac{u}{m}} + \frac{u}{m}\right) \right) \\
              & \qquad \le 2 \exp\!\big((s_1+s_2)\log 9 - c_1 u\big).
    \label{eq:prop_union_event_union}
\end{align}

\paragraph{Step 3: Global union bound over all supports}
The number of possible support pairs $(S_1, S_2)$ of fixed sizes $s_1$ and $s_2$ is bounded by $\binom{n}{s_1}\binom{n}{s_2} \le (\frac{en}{s_1})^{s_1} (\frac{en}{s_2})^{s_2}$. Applying a union bound to \eqref{eq:prop_union_event_union} over all such pairs, the probability that the spectral norm exceeds the threshold for \textit{any} pair of sizes $(s_1, s_2)$ is bounded by
\begin{equation}
    2 \exp\!\left( s_1\log\frac{en}{s_1} + s_2\log\frac{en}{s_2} + (s_1+s_2)\log 9 - c_1 u \right).
\end{equation}
Setting $u := \frac{2}{c_1}\left( s_1\log\frac{en}{s_1} + s_2\log\frac{en}{s_2} + (s_1+s_2)\log 9 + t \right)$, the right-hand side reduces to $2e^{-2t}$. Thus, with probability at least $1-2e^{-2t}$, uniformly for all supports of sizes exactly $(s_1,s_2)$, it holds that
\begin{equation}
    \|\bfW_{S_1,S_2}\|_2 \le C_3\left( \sqrt{\frac{\psi(s_1,s_2) + t}{m}} + \frac{\psi(s_1,s_2) + t}{m} \right),
\end{equation}
where $\psi(s_1,s_2) := s_1\log\frac{en}{s_1} + s_2\log\frac{en}{s_2}$, and the constant $\log 9$ terms have been absorbed into $C_3$.

Finally, applying a union bound over all sparsity levels $1 \le s_1 \le k_\bfu$ and $1 \le s_2 \le k_\bfv$ introduces an additional factor of $k_\bfu k_\bfv$, requiring a minor adjustment $t \to t + \log(k_\bfu k_\bfv)$. Since $\log\frac{en}{s} \le \log n + 1$, we trivially have $\psi(s_1, s_2) \lesssim (|S_1|+|S_2|)\log n$. Under the sample size regime $m \gtrsim (|S_1|+|S_2|)\log n + t$, the linear term is strictly dominated by the sub-Gaussian square-root term. Consequently, the bound simplifies to the purely square-root form and establishes the proposition. \hfill \IEEEQEDopen

\subsection{Proof of \Cref{prop:initialization}}

Let $(i^*, j^*)$ denote the indices of the largest entries of $\bfu$ and $\bfv$ in absolute value, respectively. By definition of the structure function, $|u_{i^*} v_{j^*}| = |u_{(1)} v_{(1)}| = 1/\sqrt{s_\bfu(1) s_\bfv(1)}$. The algorithm initializes by selecting $(i_0, j_0) = \arg\max_{i,j} |(\bhSigma_{xy})_{ij}|$, which inherently implies
\begin{equation}
    |(\bhSigma_{xy})_{i_0 j_0}| \ge |(\bhSigma_{xy})_{i^* j^*}|.
\end{equation}

Applying the triangle inequality and the reverse triangle inequality to the decomposition $\bhSigma_{xy} = \rho \bfu \bfv^\top + \bfW$, we expand both sides:
\begin{align}
    \nonumber \rho |u_{i_0} v_{j_0}| + |W_{i_0 j_0}| & \ge |(\bhSigma_{xy})_{i_0 j_0}|             \\
    \nonumber                                        & \ge |(\bhSigma_{xy})_{i^* j^*}|             \\
                                                     & \ge \rho |u_{(1)} v_{(1)}| - |W_{i^* j^*}|.
\end{align}
Rearranging terms yields a lower bound for the signal product on the selected indices:
\begin{align}
    \nonumber |u_{i_0} v_{j_0}| & \ge |u_{(1)} v_{(1)}| - \frac{1}{\rho}\left( |W_{i_0 j_0}| + |W_{i^* j^*}| \right) \\
                                & \ge \frac{1}{\sqrt{s_\bfu(1) s_\bfv(1)}} - \frac{2}{\rho} \max_{i,j} |W_{ij}|.
\end{align}

Specializing the uniform noise bound under event $\mathcal{E}$ to $1 \times 1$ submatrices, the entry-wise noise is uniformly bounded by $\max_{i,j} |W_{ij}| \le C(1+\rho)\sqrt{(\log n)/m}$. Substituting this into the inequality gives
\begin{equation}
    |u_{i_0} v_{j_0}| \ge \frac{1}{\sqrt{s_\bfu(1) s_\bfv(1)}} - \frac{2C(1+\rho)}{\rho} \sqrt{\frac{\log n}{m}}.
\end{equation}

By the initial sample complexity condition \eqref{eq:initial-t-0-sample}, we have $\sqrt{m} \ge \frac{2C(1+\rho)}{\rho(1-\sqrt{\gamma})} \sqrt{s_\bfu(1)s_\bfv(1)\log n}$. Substituting this condition, the subtraction term is strictly bounded by $(1-\sqrt{\gamma})/\sqrt{s_\bfu(1) s_\bfv(1)}$. Consequently, we have
\begin{equation}
    |u_{i_0} v_{j_0}| \ge \frac{1 - (1-\sqrt{\gamma})}{\sqrt{s_\bfu(1) s_\bfv(1)}} = \sqrt{\frac{\gamma}{s_\bfu(1) s_\bfv(1)}}.
\end{equation}
\hfill \IEEEQEDopen

\subsection{Proof of \Cref{prop:alignment}}
Consider the restricted empirical cross-covariance matrix on the given supports $S_\bfu$ and $S_\bfv$, denoted as $(\bhSigma_{xy})_{S_\bfu, S_\bfv} = \rho \bfu_{S_\bfu} \bfv_{S_\bfv}^\top + \bfW_{S_\bfu, S_\bfv}$. The unperturbed signal matrix $\rho \bfu_{S_\bfu} \bfv_{S_\bfv}^\top$ is strictly rank-1 with a unique non-zero singular value $\sigma_1 = \rho \|\bfu_{S_\bfu}\|_2 \|\bfv_{S_\bfv}\|_2$. Its corresponding left and right singular vectors are exactly the normalized true signal segments $\bar{\bfu} = \bfu_{S_\bfu}/\|\bfu_{S_\bfu}\|_2 \in \mathbb{R}^{|S_\bfu|}$ and $\bar{\bfv} = \bfv_{S_\bfv}/\|\bfv_{S_\bfv}\|_2 \in \mathbb{R}^{|S_\bfv|}$.

The estimators $\hat{\bfu}$ and $\hat{\bfv}$ (restricted to their supports $S_\bfu$ and $S_\bfv$) are, by algorithmic definition, the leading singular vectors of the perturbed matrix $(\bhSigma_{xy})_{S_\bfu, S_\bfv}$. Applying \Cref{cor:wedin_rank1_weyl} to this restricted block, we bound the principal angle between the perturbed and unperturbed singular vectors:
\begin{equation}
    \sin \angle (\hat{\bfu}_{S_\bfu}, \bar{\bfu}) \le \frac{2\|\bfW_{S_\bfu, S_\bfv}\|_2}{\rho \|\bfu_{S_\bfu}\|_2 \|\bfv_{S_\bfv}\|_2}.
\end{equation}

Since both $\hat{\bfu}_{S_\bfu}$ and $\bar{\bfu}$ are unit vectors in $\mathbb{R}^{|S_\bfu|}$, their inner product is exactly the cosine of their principal angle. Using the trigonometric identity $|\langle \hat{\bfu}_{S_\bfu}, \bar{\bfu} \rangle| = \cos \angle (\hat{\bfu}_{S_\bfu}, \bar{\bfu}) = \sqrt{1 - \sin^2 \angle (\hat{\bfu}_{S_\bfu}, \bar{\bfu})}$, we obtain
\begin{equation}
    |\langle \hat{\bfu}_{S_\bfu}, \bar{\bfu} \rangle| \ge \sqrt{1 - \left(\frac{2\|\bfW_{S_\bfu, S_\bfv}\|_2}{\rho \|\bfu_{S_\bfu}\|_2 \|\bfv_{S_\bfv}\|_2}\right)^2}.
\end{equation}

Crucially, as $\hat{\bfu}$ is supported on $S_\bfu$ and padded with zeros, the inner product with the global vector $\bfu$ satisfies $|\langle \hat{\bfu}, \bfu \rangle| = |\langle \hat{\bfu}_{S_\bfu}, \bfu_{S_\bfu} \rangle| = \|\bfu_{S_\bfu}\|_2 |\langle \hat{\bfu}_{S_\bfu}, \bar{\bfu} \rangle|$, which precisely yields the stated bound \eqref{eq:u-align}. The proof for $\hat{\bfv}$ follows identically by symmetry.
\hfill \IEEEQEDopen

\subsection{Proof of \Cref{prop:reselection}}

We focus on the update for $\bfu$. Let $T^*$ denote the theoretical support of the $|S_\bfu^{(t+1)}|$ largest entries of $\bfu$ in magnitude. By the definition of the structure function, the captured energy on this optimal set is $\|\bfu_{T^*}\|_2 = 1/\sqrt{s_\bfu(|S_\bfu^{(t+1)}|)}$.

The algorithm selects $S_\bfu^{(t+1)}$ as the indices corresponding to the largest $|S_\bfu^{(t+1)}|$ entries of the proxy vector $\bfr_\bfu = \bhSigma_{xy} \hat{\bfv}^{(t)}$. This index optimality guarantees
\begin{equation}
    \|(\bfr_\bfu)_{S_\bfu^{(t+1)}}\|_2 \ge \|(\bfr_\bfu)_{T^*}\|_2.
\end{equation}

Decomposing the proxy vector using $\bhSigma_{xy} = \rho \bfu \bfv^\top + \bfW$, we have $\bfr_\bfu = \rho \langle \bfv, \hat{\bfv}^{(t)} \rangle \bfu + \bfW \hat{\bfv}^{(t)}$. Let $\alpha = \langle \bfv, \hat{\bfv}^{(t)} \rangle$. Applying the triangle inequality to the selected support (LHS) and the reverse triangle inequality to the optimal support (RHS), we obtain
\begin{align}
    \nonumber        & \rho |\alpha| \|\bfu_{S_\bfu^{(t+1)}}\|_2 + \|(\bfW \hat{\bfv}^{(t)})_{S_\bfu^{(t+1)}}\|_2 \\
    \nonumber   \ge~ & \|(\bfr_\bfu)_{S_\bfu^{(t+1)}}\|_2                                                         \\
    \nonumber \ge~   & \|(\bfr_\bfu)_{T^*}\|_2                                                                    \\
    \ge~             & \rho |\alpha| \|\bfu_{T^*}\|_2 - \|(\bfW \hat{\bfv}^{(t)})_{T^*}\|_2.
\end{align}
Rearranging to isolate the targeted energy $\|\bfu_{S_\bfu^{(t+1)}}\|_2$:
\begin{align}
    \nonumber & \|\bfu_{S_\bfu^{(t+1)}}\|_2                                                                                                                     \\
    \ge~      & \|\bfu_{T^*}\|_2 - \frac{1}{\rho |\alpha|} \left( \|(\bfW \hat{\bfv}^{(t)})_{S_\bfu^{(t+1)}}\|_2 + \|(\bfW \hat{\bfv}^{(t)})_{T^*}\|_2 \right).
\end{align}

It remains to bound the projection of the noise terms. For any index set $K$ (representing either $S_\bfu^{(t+1)}$ or $T^*$), since $\hat{\bfv}^{(t)}$ is supported entirely on $S_\bfv^{(t)}$ and is a unit vector, we have
\begin{align}
    \nonumber  \|(\bfW \hat{\bfv}^{(t)})_K\|_2 & = \| \bfW_{K, S_\bfv^{(t)}} (\hat{\bfv}^{(t)})_{S_\bfv^{(t)}} \|_2 \\
    \nonumber                                  & \le \|\bfW_{K, S_\bfv^{(t)}}\|_2 \|\hat{\bfv}^{(t)}\|_2            \\
                                               & = \|\bfW_{K, S_\bfv^{(t)}}\|_2.
\end{align}

The matrix $\bfW_{K, S_\bfv^{(t)}}$ constitutes a rectangular submatrix of size $|K| \times |S_\bfv^{(t)}|$. Because $|S_\bfu^{(t+1)}| = |T^*|$, both instantiations of the noise block have identical dimensions. Conditioned on the uniform event $\mathcal{E}$, the spectral norm of any such restricted block is bounded by
\begin{equation}
    \|\bfW_{K, S_\bfv^{(t)}}\|_2 \le C(1+\rho)\sqrt{\frac{(|S_\bfu^{(t+1)}| + |S_\bfv^{(t)}|)\log n}{m}}.
\end{equation}

Substituting this spectral bound back into the energy inequality, we arrive at the desired conclusion
\begin{align}
         & \|\bfu_{S_\bfu^{(t+1)}}\|_2                                                                                                              \\
    \ge~ & \frac{1}{\sqrt{s_\bfu(|S_\bfu^{(t+1)}|)}} - \frac{2C(1+\rho)}{\rho |\alpha|} \sqrt{\frac{(|S_\bfu^{(t+1)}| + |S_\bfv^{(t)}|)\log n}{m}}.
\end{align}
The symmetric lower bound for $\|\bfv_{S_\bfv^{(t+1)}}\|_2$ follows an identical derivation.
\hfill \IEEEQEDopen

%% file: 09Suppmentary.tex
\clearpage
\onecolumn

\setcounter{page}{1}
\setcounter{section}{0}
\setcounter{equation}{0}
\setcounter{figure}{0}
\setcounter{table}{0}

\renewcommand{\thepage}{S-\arabic{page}}
\renewcommand{\thesection}{S.\Roman{section}}
\renewcommand{\theequation}{S.\arabic{equation}}
\renewcommand{\thefigure}{S.\arabic{figure}}

\begin{center}
    {\Large \bf Supplementary Material for \\
        ``Beyond the Flat-Spike: Adaptive Sparse CCA for Decaying and Unbalanced Signals''} \\[1em]
    \large Mengchu Xu, Jian Wang, and Yonina C. Eldar
\end{center}
\vspace{2em}
This supplementary material provides the proofs for the cited lemmas (\Cref{lem:asymp-equiv} and \Cref{lem:power_law_sp}) which are omitted from the main manuscript due to space constraints and are only available on arXiv. Additionally, it contains an empirical evaluation demonstrating the sensitivity of the Bi-SEP algorithm to varying cross-view correlation levels ($\rho$).
\section{Proof of \Cref{lem:asymp-equiv}}
We prove (1)\(\Rightarrow\)(2)\(\Rightarrow\)(3)\(\Rightarrow\)(1) with absolute constants.

(1)\(\Rightarrow\)(2): For \(p\ge1\), \((p{+}1)s(p) \le 2ps(p)\). Hence from \(m\ge C_1\,ps(p)\) we get
\[
    m\ \ge\ \tfrac{C_1}{2}\,(p{+}1)s(p),
\]
i.e., (2) holds with \(C_2= C_1/2\).

(2)\(\Rightarrow\)(3): Since \(s(p)\ge s(p{+}1)\), we have \((p{+}1)s(p)\ge ps(p{+}1)\). Thus from \(m\ge C_2\,(p{+}1)s(p)\) we obtain \(m\ge C_2\,ps(p{+}1)\), i.e., (3) with \(C_3=C_2\).

(3)\(\Rightarrow\)(1): Using \(ps(p)\le (p{+}1)s(p{+}1)\le 2\,ps(p{+}1)\), from \(m\ge C_3\,ps(p{+}1)\) we get
\[
    m\ \ge\ \tfrac{C_3}{2}\,ps(p),
\]
i.e., (1) with \(C_1=C_3/2\).

Combining the three implications yields the claimed constant-factor equivalence among (1)-(3). \hfill \IEEEQEDopen

\vspace{1em}
\section{Proof of \Cref{lem:power_law_sp}}
Let \(H_{k,\alpha} := \sum_{i=1}^k i^{-\alpha}\) (generalized harmonic number) and \(H_{p,\alpha}:=\sum_{i=1}^p i^{-\alpha}\).
Under the normalization \(\sum_{i=1}^k v_{(i)}^2=1\) we have
\[
    v_{(i)}^2=\frac{i^{-\alpha}}{H_{k,\alpha}},\qquad
    s(p)=\frac{H_{k,\alpha}}{H_{p,\alpha}}.
\]

We use the integral test to give bounds on \(H_{k,\alpha}\).
For \(f(x)=x^{-\alpha}\), \(f\) is positive and decreasing on \([1,\infty)\). For every integer \(i\ge 1\),
\[
    \int_{i}^{i+1} f(x)\,dx \;\le\; f(i) \;\le\; \int_{i-1}^{i} f(x)\,dx.
\]
Summing over \(i=1,\ldots,k\) gives
\[
    \int_{1}^{k+1} x^{-\alpha}dx \;\le\; H_{k,\alpha} \;\le\; 1+\int_{1}^{k} x^{-\alpha}dx.
\]
Evaluating the integrals yields:
\[
    H_{k,\alpha}\asymp
    \begin{cases}
        \dfrac{k^{1-\alpha}}{1-\alpha}, & 0<\alpha<1, \\[6pt]
        1 + \log k,                     & \alpha=1,   \\[4pt]
        1,                              & \alpha>1,
    \end{cases}
    \qquad
    \text{and thus}\quad
    s(p) = \frac{H_{k,\alpha}}{H_{p,\alpha}}\asymp
    \begin{cases}
        \Big(\dfrac{k}{p}\Big)^{1-\alpha}, & 0<\alpha<1, \\[8pt]
        \dfrac{1+\log k}{1+\log p},        & \alpha=1,   \\[6pt]
        1,                                 & \alpha>1.
    \end{cases}
\]
This completes the proof.

\vspace{1em}
\section{Sensitivity to the Correlation Level}
\label{sec:exp-rho}

The theoretical sample complexity in \Cref{thm:recovery_guarantee} depends on the canonical correlation level $\rho$, where smaller $\rho$ corresponds to a weaker cross-view signal and therefore requires a larger sample size for reliable recovery. To examine this effect empirically, we vary $\rho$ while keeping $n=m=1000$ and $k_\bfu=k_\bfv=20$. The results are summarized in \Cref{tab:rho_sensitivity}. For each setting, we report the maximum direction estimation error $\max\!\big(\sin\angle(\hat{\bfu},\bfu),\sin\angle(\hat{\bfv},\bfv)\big)$.

When $\rho$ is very small (e.g., $\rho\le0.3$), both algorithms fail to recover the signals due to the extremely low signal-to-noise ratio. As $\rho$ increases, the estimation error decreases for both methods. For structured signals (exponential and power-law decay), Bi-SEP achieves consistently lower estimation error once the correlation becomes moderate, whereas in the flat-signal regime the two algorithms exhibit comparable behavior. These observations are consistent with the theoretical scaling in \Cref{thm:recovery_guarantee}, where smaller correlation levels require more samples for reliable recovery.

\begin{table}[h]
    \centering
    \caption{Sensitivity to the correlation level $\rho$. The table reports the maximum direction estimation error $\max\!\big(\sin\angle(\hat{\bfu},\bfu), \sin\angle(\hat{\bfv},\bfv)\big)$.}
    \label{tab:rho_sensitivity}
    \begin{tabular}{c|cc|cc|cc}
        \toprule
        $\rho$
            & \multicolumn{2}{c|}{Flat}
            & \multicolumn{2}{c|}{Power-law} & \multicolumn{2}{c}{Exponential}
        \\
            & TPower                         & Bi-SEP                          & TPower & Bi-SEP & TPower & Bi-SEP \\
        \midrule
        0.3 & 0.999                          & 0.999                           & 0.999  & 0.998  & 0.997  & 0.882  \\
        0.5 & 0.999                          & 0.999                           & 0.998  & 0.954  & 0.849  & 0.682  \\
        0.7 & 0.996                          & 0.997                           & 0.902  & 0.599  & 0.596  & 0.531  \\
        0.9 & 0.902                          & 0.945                           & 0.518  & 0.350  & 0.448  & 0.424  \\
        \bottomrule
    \end{tabular}
\end{table}